\documentclass[11pt]{article}

\bibliographystyle{plain}

\usepackage[margin=1in]{geometry}
\usepackage{amsmath}
\usepackage{amssymb}
\usepackage{amsthm}
\usepackage{mathtools}
\usepackage[shortlabels]{enumitem}
\usepackage{url}

\newtheorem{theorem}{Theorem}
\newtheorem{observation}[theorem]{Observation}
\newtheorem{lemma}[theorem]{Lemma}
\newtheorem{corollary}[theorem]{Corollary}
\newtheorem{construction}[theorem]{Construction}

\theoremstyle{definition}
\newtheorem*{definition}{Definition}
\newtheorem{example}[theorem]{Example}

\numberwithin{equation}{section}
\numberwithin{theorem}{section}

\DeclareMathOperator{\Dim}{Dim}
\DeclareMathOperator{\dimH}{dim_H}

\DeclareMathOperator{\dimP}{dim_P}

\DeclareMathOperator{\ldimH}{\underline{dim}_H}
\DeclareMathOperator{\ldimP}{\underline{dim}_P}
\DeclareMathOperator{\udimH}{\overline{dim}_H}
\DeclareMathOperator{\udimP}{\overline{dim}_P}
\DeclareMathOperator{\dimloc}{dim_{loc}}
\DeclareMathOperator{\Dimloc}{Dim_{loc}}
\DeclareMathOperator{\K}{K}

\newcommand{\N}{\mathbb{N}}
\newcommand{\Q}{\mathbb{Q}}
\newcommand{\R}{\mathbb{R}}
\newcommand{\Z}{\mathbb{Z}}
\newcommand{\cA}{\mathcal{A}}
\newcommand{\cB}{\mathcal{B}}
\newcommand{\cP}{\mathcal{P}}
\newcommand{\cQ}{\mathcal{Q}}
\newcommand{\bkappa}{\text{\boldmath $\kappa$}}
\newcommand{\bm}{\mathbf{m}}

\title{Algorithmically Optimal Outer Measures}

\author{Jack H. Lutz\\Iowa State University\and Neil Lutz\\Swarthmore College}

\date{}

\begin{document}

\maketitle

\begin{abstract}
	We investigate the relationship between algorithmic fractal dimensions and the classical local fractal dimensions of outer measures in Euclidean spaces. We introduce global and local optimality conditions for lower semicomputable outer measures. We prove that globally optimal outer measures exist. Our main theorem states that the classical local fractal dimensions of any locally optimal outer measure coincide exactly with the algorithmic fractal dimensions. Our proof uses an especially convenient locally optimal outer measure $\bkappa$ defined in terms of Kolmogorov complexity. We discuss implications for point-to-set principles.
\end{abstract}

\section{Introduction}
	Algorithmic fractal dimensions, which quantify the density of algorithmic information in individual points~\cite{Lutz03b,AHLM07,LutMay08}, have recently been used to prove new theorems~\cite{LutStu20,Lutz21,LuLuMa23,LutStu24} about their classical forerunners, the Hausdorff and packing dimensions of sets. Since algorithmic fractal dimensions are products of the theory of computing, and since the aforementioned new theorems are entirely classical (not involving logic or the theory of computing), these developments call for a more thorough investigation of the relationships between algorithmic and classical fractal dimensions. One significant facet of this investigation, initiated by Orponen~\cite{Orpo21}, is to look for purely classical proofs of these new classical theorems.
	
	In this paper, taking a different approach, we establish direct connections between algorithmic and classical fractal dimensions. Aside from the presence versus absence of algorithms, the most striking difference between algorithmic fractal dimensions and classical fractal dimensions is that the algorithmic dimensions are usefully defined for individual points in Euclidean space, while the classical Hausdorff and packing dimensions vanish on individual points. To bridge this gap, we examine the classical {\it local dimensions} (also called \emph{pointwise dimensions}) of outer measures at individual points in Euclidean spaces~\cite{Falc14}. These local fractal dimensions have been studied at least since the 1930s and are essential tools in multifractal analysis~\cite{Fros35, Falc97}. Outer measures and the algorithmic and local dimensions are defined precisely in Section~\ref{sec:alfd} below.
	
	Outer measures, introduced by Carath\'{e}odory~\cite{Cara14} in the ``prehistory'' of Hausdorff dimension~\cite{Haus19} (defining what later became known as the 1-dimensional Hausdorff measure), are now best known for their role in Carath\'{e}odory's program~\cite{Cara18} to generalize Lebesgue measure to a wide variety of settings~\cite{Tao11}. However, it is the role of outer measures in local fractal dimensions that are of interest here.
	
	The second author observed~\cite{Lutz17} that a particular, very nonclassical outer measure $\bkappa$, defined in terms of Kolmogorov complexity, has the property that the classical local fractal dimensions of $\bkappa$ coincide exactly with the algorithmic fractal dimensions at every point in $\R^n$. This property of $\bkappa$ is analogous to Levin's coding theorem~\cite{jLevi73,jLevi74}, which pertains to a particular, very nonclassical subprobability measure $\bm$ on strings. Levin's theorem says that if we substitute $\bm$ for the probability measure $p$ in the classical Shannon \emph{self-information}~\cite{Shan48} $\log 1/p(x)$, then the resulting quantity $\log 1/\bm(x)$ is essentially the prefix Kolmogorov complexity (i.e., the algorithmic information content) of the string $x$.
	
	Levin defined $\bm$ as an optimal lower semicomputable subprobability measure, so the above analogy leads us to investigate here the algorithmic optimality properties of $\bkappa$ and other outer measures on Euclidean spaces.
	
	We first investigate outer measures that are \emph{globally optimal}, a property that is closely analogous to the optimality property of Levin's $\bm$. In Section~\ref{sec:gao} we prove \emph{ab initio} that globally optimal outer measures on $\R^n$ exist. We also show that Levin's $\bm$ lifts in a natural way to a specific example of such a measure.
	
	As it turns out, the outer measure $\bkappa$ is not globally optimal. In Section~\ref{sec:lao} we prove this fact, and we introduce and investigate the more general and more subtly defined class of \emph{locally optimal} outer measures on $\R^n$. Our main theorem establishes that \emph{every} locally optimal outer measure $\mu$ on a Euclidean space $\R^n$ has the property that the classical local fractal dimensions of $\mu$ coincide exactly with the algorithmic dimensions at every point in $\R^n$.
	
	In Section~\ref{sec:pspdm} we discuss implications of our results, especially for the point-to-set principles that have enabled the new classical results mentioned in the first paragraph of this introduction.

\section{Algorithmic and Local Fractal Dimensions}\label{sec:alfd}
	This section reviews the algorithmic fractal dimensions and the classical local fractal dimensions.

	Following standard practice~\cite{Nies09,DowHir10,LiVit19}, we fix a universal prefix Turing machine $U$ and define the \emph{(prefix) Kolmogorov complexity} of a string $w\in\{0,1\}^*$ to be
	\[\K(w)=\min\left\{|\pi|\;\big|\;\pi\in\{0,1\}^*\text{ and }U(\pi)=w\right\},\]
	i.e., the minimum number of bits required to cause $U$ to output $w$. By standard binary encodings, we extend this from $\{0,1\}^*$ to other countable domains. In particular, the Kolmogorov complexity $\K(q)$ of a rational point $q\in\Q^n$ is well defined.

	The \emph{Kolmogorov complexity} of a point $x\in \R^n$ at a \emph{precision} $r\in\N$ is
	\[\K_r(x)=\min\left\{\K(q)\mid q\in \Q^n\text{ and }|q-x|< 2^{-r}\right\},\]
	where $|q-x|$ is the Euclidean distance from $q$ to $x$.

	We now define the algorithmic fractal dimensions of points in $\R^n$.
	\begin{definition}[\cite{Lutz03b,Mayo02,AHLM07}] Let $x\in \R^n$.
		\begin{enumerate}
			\item The \emph{algorithmic dimension} of $x$ is
			\begin{equation}\label{eq:dim}
				\dim(x)=\liminf_{r\to\infty}\frac{\K_r(x)}{r}.
			\end{equation}
			\item The \emph{strong algorithmic dimension} of $x$ is
			\begin{equation}\label{eq:Dim}
				\Dim(x)=\limsup_{r\to\infty}\frac{\K_r(x)}{r}.
			\end{equation}
		\end{enumerate}
	\end{definition}
	See~\cite{Reim04,Mayo08,LutLut20} for surveys of these notions.

	The classical local fractal dimensions are local properties of outer measures. An \emph{outer measure} on a set $X$~\cite{Tao11} is a function $\mu:\cP(X)\to[0,\infty]$ (where $\cP(X)$ is the power set of $X$) with the following three properties.
	\begin{enumerate}[(i)]
		\item (vanishes on empty set) $\mu(\emptyset)=0$.
		\item (monotonicity) For all $E,F\subseteq X$,
		\[E\subseteq F\implies \mu(E)\leq \mu(F).\]
		\item (countable subadditivity) For all $E_0,E_1,\ldots\subseteq X$,
		\[\mu\left(\bigcup_{n=0}^\infty E_n\right)\leq \sum_{n=0}^\infty \mu(E_n).\]
	\end{enumerate}
	An outer measure $\mu$ on a set $X$ is \emph{finite} if $\mu(X)<\infty$ and \emph{locally finite} if, for every point $x\in X$, there is an open set $E\ni x$ such that $\mu(E)<\infty$.

	\begin{definition}[\cite{Falc14}]
		If $\mu$ is a locally finite outer measure on $\R^n$, then the \emph{lower} and \emph{upper local} (or \emph{pointwise}) \emph{dimensions} of $\mu$ at a point $x\in \R^n$ are
		\begin{equation}\label{eq:dimloc}
			\dimloc\mu(x)=\liminf_{r\to\infty}\frac{\log\frac{1}{\mu(B(x,2^{-r}))}}{r}
		\end{equation}
		and
		\begin{equation}\label{eq:Dimloc}
			\Dimloc\mu(x)=\limsup_{r\to\infty}\frac{\log\frac{1}{\mu(B(x,2^{-r}))}}{r},
		\end{equation}
		respectively. (The logarithms here are base-2, and $B(x,\varepsilon)$ is the open ball of radius $\varepsilon$ about $x$ in $\R^n$.)
	\end{definition}
	
	As stated in the introduction, our main objective is to identify a class of outer measures that cause the classical local fractal dimensions~\eqref{eq:dimloc} and~\eqref{eq:Dimloc} to coincide with the algorithmic fractal dimensions~\eqref{eq:dim} and~\eqref{eq:Dim}.

\section{Global Algorithmic Optimality}\label{sec:gao}

	The optimality notions that we discuss in this paper concern outer measures with three special properties that we now define.

	\begin{definition}
		An outer measure $\mu$ on $\R^n$ is \emph{finitely supported} on $\Q^n$ if, for every $\varepsilon>0$, there is a finite set $A\subseteq\Q^n$ such that $\mu(\R^n\setminus A)<\varepsilon$.
	\end{definition}

	Note that an outer measure $\mu$ on $\R^n$ that is finitely supported on $\Q^n$ is supported on $\Q^n$ in the usual sense that $\mu(\R^n\setminus\Q^n)=0$. The following example shows that the converse does not hold.

	\begin{example}\label{ex:suppnotfinsupp}
		Define the function $\mu:\cP(\R^n)\to[0,1]$ by
		\[\mu(E)=1-2^{-|E\cap\Q^n|},\]
		where $2^{-\infty}=0$. This function is an outer measure on $\R^n$. To see that it is countably subadditive, let $E_0,E_1,\ldots\in\R^n$, let $E=\bigcup_{n=0}^\infty E_i$, and consider two cases.
		\begin{enumerate}
			\item If there is some $j\in\N$ with $|E_j\cap\Q^n|=|E\cap\Q^n|$, then \[\mu(E)=\mu(E_j)\leq\sum_{i=0}^\infty \mu(E_i).\]
			\item Otherwise, there are some distinct $j,k\in\N$ such that $E_j\cap\Q^n$ and $E_k\cap\Q^n$ are both non-empty. In this case, we have
			\[\mu(E)\leq 1\leq \mu(E_j)+\mu(E_k)\leq\sum_{i=0}^\infty \mu(E_i).\]
		\end{enumerate}
		Furthermore, $\mu$ is supported, but not finitely supported, on $\Q^n$.
	\end{example}

	\begin{definition}
		An outer measure $\mu$ on $\R^n$ is \emph{strongly finite} if $\mu$ is supported on $\Q^n$ and
		\[\sum_{q\in\Q^n}\mu(\{q\})<\infty.\]
	\end{definition}
	It is clear that every strongly finite outer measure is finite. The outer measure of Example~\ref{ex:suppnotfinsupp} shows that the converse does not hold. Note also that every strongly finite outer measure is finitely supported on $\Q^n$.

	\begin{definition}
		An outer measure $\mu$ on $\R^n$ is \emph{lower semicomputable} if it is finitely supported on $\Q^n$ and there is a computable function
		\[\hat{\mu}:\cP^{<\omega}(\Q^n) \times\N\to\Q\cap[0,\infty)\]
		(where $\cP^{<\omega}(\Q^n)$ is the \emph{finite power set} of $\Q^n$, i.e, the set of all finite subsets of $\Q^n$) with the following two properties.
		\begin{enumerate}[(i)]
			\item For all $A\in\cP^{<\omega}(\Q^n)$ and $s,t\in\N$,
			\[s\leq t\implies \hat{\mu}(A,s)\leq \hat{\mu}(A,t)\leq \mu(A).\]
			\item For all $A\in \cP^{<\omega}(\Q^n)$,
			\[\lim_{t\to\infty}\hat{\mu}(A,t)=\mu(A).\]
		\end{enumerate}
	\end{definition}

	\begin{observation}\label{obs:thirdcond}
		Given any lower semicomputable outer measure $\mu$, there is a computable function $\hat{\mu}:\cP^{<\omega}(\Q^n) \times\N\to\Q\cap[0,\infty)$ that satisfies properties (i) and (ii) from the definition of lower semicomputability and also satisfies, for all $t\in\N$ and $A,B\in\cP^{<\omega}(\Q^n)$,
		\[A\subseteq B\implies \hat{\mu}(A,t)\leq\hat{\mu}(B,t).\]
	\end{observation}
	\begin{proof}
		Let $\hat{\nu}$ be a function testifying to the lower semicomputability of $\mu$, and define
		\[\hat{\mu}(A,t)=\max_{B\subseteq A}\hat{\nu}(B,t).\]
	\end{proof}

	\begin{definition}
		An outer measure $\mu$ on $\R^n$ is \emph{globally optimal} if the following properties hold.
		\begin{enumerate}[(i)]
			\item $\mu$ is strongly finite and lower semicomputable.
			\item For every strongly finite, lower semicomputable outer measure $\nu$ on $\R^n$, there is a constant $\beta\in(0,\infty)$ such that, for all $E\subseteq\R^n$,
			\[\mu(E)\geq\beta\cdot\nu(E).\]
		\end{enumerate}
	\end{definition}

    Following standard practice in the analogous discrete setting~\cite{LiVit19,DowHir10,SheVer02}, we first give a direct construction of a globally optimal outer measure on $\R^n$. In constructing this outer measure, we will use a computable enumeration of all strongly finite, lower semicomputable outer measures that take values in $[0,1]$. We now describe this enumeration.

 	\begin{construction}\label{con:theta}
		Fix a computable enumeration $q_0,q_1,q_2,\ldots$ of $\Q^n$, and for all $t\in \N$, let $\Q^n_t$ denote $\{q_0,\ldots,q_t\}$.
		Let $M_0,M_1,M_2\ldots$ be a computable enumeration of all prefix Turing machines that take inputs in $\cP^{<\omega}(\Q^n) \times\N$, give outputs (if they halt) in $\Q\cap[0,1]$, and satisfy
		\begin{equation}\label{eq:nullset}
			M_k(\emptyset,t)=0
		\end{equation}
		for all $k,t\in\N$. For all $k\in\N$, we define the following functions.
		\begin{itemize}
			\item The function $\tau_k:\N\times\N\to\{0,1\}$ is given by
			\begin{equation*}\label{eq:taudef}
				\tau_k(s,t)=\begin{cases}1&\text{if for all }D\subseteq \Q^n_s\text{ and }r\leq s\text{, }M_k\text{ halts on input }(D,r)\text{ within }t\text{ steps}\\0&\text{otherwise.}\end{cases}.
			\end{equation*}
			Observe that $\tau_k(s,t)$ is a computable function of $k$, $s$, and $t$, and that $\tau_k$ is monotone in its second argument. This function will serve as an indicator for sufficiently large wait parameters. The parameter $b$ is present to ensure strong finiteness.
			\item For all $b\in\N$, the function $\eta_{k,b}:\cP^{<\omega}(\Q^n) \times\N\to\Q\cap[0,1]$ is given by
			\begin{equation}\label{eq:etadef}
				\eta_{k,b}(A,t)=\max_{\substack{C\subseteq A\\s\leq t}}\left\{M_k(C\cap\Q^n_s,s)\;\middle|\; \tau_k(s,t)=1\text{ and }\sum_{q\in\Q^n_s}M_k(\{q\},s)\leq b\right\}.
			\end{equation}
			It follows from the computability and monotonicity of $\tau_k$ that $\eta_{k,b}$ is also computable and is monotone in both arguments, and it follows from~\eqref{eq:nullset} that $\eta_{k,b}(\emptyset,t)=0$ for all $t\in\N$. Optimizing over finite covers of $A$ yields countable subadditivity.
			\item For all $b\in\N$, the function $\hat{\theta}_{k,b}:\cP^{<\omega}(\Q^n) \times\N\to\Q\cap[0,1]$ is given by
			\begin{equation}\label{eq:thetahatdef}
				\hat{\theta}_{k,b}(A,t)=\min\left\{\sum_{i=0}^{\ell} \eta_{k,b}(A_i,t)\;\middle|\;\ell\in\N,\ A_0,\ldots,A_{\ell}\subseteq A,\ \text{and}\ A\subseteq \bigcup_{i=0}^{\ell} A_i\right\}.
			\end{equation}
			From $\eta_{k,b}$, this function inherits computability, monotonicity in both arguments, and the property $\hat{\theta}_{k,b}(\emptyset,t)=0$ for all $t\in\N$.
	
			\item For all $b\in\N$, the function $\theta_{k,b}:\cP (\R^n)\to[0,1]$ is given by
			\begin{equation}\label{eq:thetadef}
				\theta_{k,b}(E)=\lim_{t\to\infty}\hat{\theta}_{k,b}(E\cap\Q^n_t,t).
			\end{equation}
			This limit exists by the monotonicity of $\hat{\theta}_{k,b}$ in its second argument.
		\end{itemize}
		\end{construction}
		\begin{lemma}\label{lem:strongfin}
			For all $k,b\in\N$, the function $\theta_{k,b}$ is a strongly finite, lower semicomputable outer measure.
		\end{lemma}
		\begin{proof}
			Fix $k,b\in\N$. The function $\theta_{k,b}$ is monotone by the monotonicity of $\hat{\theta}_{k,b}$ in its first argument, and it vanishes on the empty set because $\hat{\theta}_{k,b}(\emptyset,t)=0$ for all $t\in\N$. To prove that $\theta_{k,b}$ is an outer measure, then, it suffices to show that $\theta_{k,b}$ is countably subadditive.
		
            For this, let $E_0,E_1,E_2,\ldots\subseteq \R^n$, and let $E=\bigcup_{i=0}^\infty E_i$. Let $\varepsilon>0$, and let $A\in\cP^{<\omega}(E\cap\Q^n)$ and $t\in\N$ be such that
            \begin{equation}\label{eq:mainlem1}
                \theta_{k,b}(E)<\hat{\theta}_{k,b}(A,t)+\varepsilon.
            \end{equation}
            For each $i\in\N$, let $A_i=E_i\cap A$, and let $\ell_i$ and $A_{i,0},\ldots,A_{i,\ell_i-1}$ be such that
            \begin{equation}\label{eq:mainlem2}
                \hat{\theta}_{k,b}(A_i,t)=\sum_{j=0}^{\ell_i-1}\eta_{k,b}(A_{i,j},t).
            \end{equation}
            Since $A$ is a finite set, there is some $\ell\in\N$ such that $A\subseteq \bigcup_{i=0}^{\ell-1}A_i$, so
            \begin{align*}
                \theta_{k,b}(E)-\varepsilon&<\hat{\theta}_{k,b}(A,t)\tag{by inequality~\eqref{eq:mainlem1}}\\
                &\leq\sum_{i=0}^{\ell-1}\sum_{j=0}^{\ell_i-1}\eta_{k,b}(A_{i,j},t)\tag{by the definition of $\hat{\theta}_{k,b}(A,t)$}\\
                &=\sum_{i=0}^{\ell-1}\hat{\theta}_{k,b}(A_i,t)\tag{by equation~\eqref{eq:mainlem2}}\\
                &\leq \sum_{i=0}^{\ell-1}\theta_{k,b}(E_i).\tag{by the monotonicity of $\hat{\theta}_{k,b}$}
            \end{align*}
            Letting $\varepsilon\to 0$, we have
            \[\theta_{k,b}(E)\leq \sum_{i=0}^\infty\theta_{k,b}(E_i),\]
            and we conclude that $\theta_{k,b}$ is an outer measure.

			To see that $\theta_{k,b}$ is strongly finite, observe that
			\[\sum_{q\in\Q^n}\theta_{k,b}(\{q\})=\sum_{q\in\Q^n}\lim_{t\to\infty}\hat{\theta}_{k,b}(\{q\}\cap\Q^n_t,t)=\lim_{t\to\infty}\sum_{q\in\Q^n}\hat{\theta}_{k,b}(\{q\}\cap\Q^n_t,t)\]
			by~\eqref{eq:thetadef} and the monotone convergence theorem. Hence, it suffices to show for all $t\in\N$ that
			\begin{equation*}\label{eq:bbound1}
				\sum_{q\in\Q^n}\hat{\theta}_{k,b}(\{q\}\cap\Q^n_t,t)\leq b.
			\end{equation*}
			To this end, fix $t\in\N$. As noted above,
			\begin{equation}\label{eq:nulleta}
				\hat{\theta}_{k,b}(\emptyset,t)=0.
			\end{equation}
			For all $q\in\Q^n$, we have
			\begin{equation}\label{eq:singleton}
				\hat{\theta}_{k,b}(\{q\},t)=\eta_{k,b}(\{q\},t).
			\end{equation}
			By~\eqref{eq:etadef}, there is some $s\leq t$ such that
			\begin{equation}\label{eq:etabound}
				\sum_{q\in\Q^n}\eta_{k,b}(\{q\}\cap\Q^n_t,t)=\sum_{q\in\Q^n}M_k(\{q\}\cap\Q^n_t\cap\Q^n_s,s)
				=\sum_{q\in\Q^n_s}M_k(\{q\},s)
				\leq b.
			\end{equation}
			Combining~\eqref{eq:nulleta},~\eqref{eq:singleton} and~\eqref{eq:etabound}, we have
			\begin{align*}
				\sum_{q\in\Q^n}\hat{\theta}_{k,b}(\{q\}\cap\Q^n_t,t)&=\sum_{q\in\Q^n_t}\hat{\theta}_{k,b}(\{q\},t)\\
				&=\sum_{q\in\Q^n_t}\eta_{k,b}(\{q\},t)\\
				&\leq b.
			\end{align*}

			Finally, $\hat{\theta}_{k,b}$ is a witness to the semicomputability of $\theta_{k,b}$; equation~\eqref{eq:thetadef} and the monotonicity of $\hat{\theta}_{k,b}$ in both arguments give us property (i) of lower semicomputable outer measures, and property (ii) of lower semicomputable outer measures is immediate from equation~\eqref{eq:thetadef}.
		\end{proof}

	\begin{lemma}\label{lem:maintech}
		Let $\Theta$ be the set of all strongly finite, lower semicomputable outer measures $\mu:\cP(\R^n)\to[0,1]$. Then $\Theta=\{\theta_{k,b}\mid k,b\in\N\}$.
	\end{lemma}
	\begin{proof}
		Lemma~\ref{lem:strongfin} establishes that $\{\theta_{k,b}\mid k,b\in\N\}\subseteq\Theta$. For the other direction, let $\mu\in\Theta$, and let $\hat{\mu}$ be a witness to the lower semicomputability of $\mu$. By Observation~\ref{obs:thirdcond} we can assume without loss of generality that $\hat{\mu}$ is monotone in its first argument. Then $\hat{\mu}$ is computable, and $\hat{\mu}(\emptyset,t)=0$ for all $t\in\N$. Therefore there is also some $k\in\N$ such that, for all $(A,t)\in\cP^{<\omega}(\Q^n) \times\N$, we have $M_k(A,t)=\hat{\mu}(A,t)$.

		We now show that for all $A\in\cP^{<\omega}(\Q^n) $, \[\lim_{t\to\infty}\hat{\theta}_{k,b}(A,t)=\lim_{t\to\infty}\hat{\mu}(A,t).\]
		Let $A\in\cP^{<\omega}(\Q^n) $ and $t\in\N$, and let $b=\sum_{q\in\Q^n}\mu(\{q\})<\infty$. Then
		\begin{align*}
			\hat{\theta}_{k,b}(A,t)&\leq \eta_{k,b}(A,t)\\
			&=\max_{\substack{C\subseteq A\\s\leq t}}\left\{M_k(C\cap\Q^n_s,s)\;\middle|\; \tau_{k}(s,t)=1\text{ and }\sum_{q\in\Q^n_s}M_k(\{q\},s)\leq b\right\}\\
			&\leq \max_{\substack{C\subseteq A\\s\leq t}}M_{k}(C\cap\Q^n_s,s)\\
			&= \max_{\substack{C\subseteq A\\s\leq t}}\hat{\mu}(C\cap\Q^n_s,s)\\
			&\leq\hat{\mu}(A,t),
		\end{align*}
		by the monotonicity of $\hat{\mu}$.

		Now let $\varepsilon>0$, let $t\in\N$ be such that $A\subseteq\Q^n_t$ and for all $C\subseteq A$,
		\[\hat{\mu}(C,t)\geq\mu(C)-\varepsilon/2^{|A|},\]
		and let $T\geq t$ be sufficiently large so that $\tau_k(t,T)=1$. Such a $T$ exists because the computability of $\hat{\mu}$ implies that $M_k$ halts on all inputs in $\mathcal{P}^{<\omega}(\Q^n)\times\N$. Then for each $C\subseteq A$,
		\begin{align*}
			\eta_{k,b}(C,T)
			&=\max_{\substack{D\subseteq C\\s\leq T}}\left\{M_k(D\cap\Q^n_s,s)\;\middle|\; \tau_k(s,T)=1\text{ and }\sum_{q\in\Q^n_s}M_k(\{q\},s)\leq b\right\}\\
			&\geq M_k(C\cap \Q^n_t,t)\\
			&=\hat{\mu}(C,t).
		\end{align*}
		It follows that
		\begin{align*}
			\hat{\theta}_{k,b}(A,T)&=\min\left\{\sum_{i=0}^{\ell} \eta_k(A_i,T)\;\middle|\;\ell\in\N,\ A_0,\ldots,A_{\ell}\subseteq A\text{, and }A\subseteq \bigcup_{i=0}^{\ell} A_i\right\}\\
			&\geq \min\left\{\sum_{i=0}^{\ell} \hat{\mu}(A_i,t)\;\middle|\;\ell\in\N,\ A_0,\ldots,A_{\ell}\subseteq A\text{, and }A\subseteq \bigcup_{i=0}^{\ell} A_i\right\}\\
			&\geq \min\left\{\sum_{i=0}^{\ell} \left(\mu(A_i)-\frac{\varepsilon}{2^{|A|}}\right)\;\middle|\;\ell\in\N,\ A_0,\ldots,A_{\ell}\subseteq A\text{, and }A\subseteq \bigcup_{i=0}^{\ell} A_i\right\}\\
			&\geq\mu(A)-\varepsilon\\
			&\geq\hat{\mu}(A,t)-\varepsilon,
		\end{align*}
		by the countable subadditivity of $\mu$.

		We have shown that for every $A\in\cP^{<\omega}(\Q^n) $, every $\varepsilon>0$, and every sufficiently large $t\in\N$, $\hat{\theta}_{k,b}(A,t)\leq\hat{\mu}(A,t)$ and there exists some $T\in\N$ such that $\hat{\theta}_{k,b}(A,T)\geq \hat{\mu}(A,t)-\varepsilon$. This implies that for all $A\in\cP^{<\omega}(\Q^n) $,
		\[\lim_{t\to\infty}\hat{\theta}_{k,b}(A,t)=\lim_{t\to\infty}\hat{\mu}(A,t),\]
		and therefore for all $E\subseteq \R^n$,
		\begin{align*}
			\theta_{k,b}(E)
			&=\lim_{t\to\infty}\hat{\theta}_{k,b}(E\cap\Q^n_t,t)\\
			&=\lim_{t\to\infty}\hat{\mu}(E\cap\Q^n_t,t)\\
			&=\mu(E).
		\end{align*}
		We conclude that $\mu\in\{\theta_{k,b}\mid k,b\in\N\}$, so $\Theta\subseteq \{\theta_{k,b}\mid k,b\in\N\}$.
	\end{proof}

	\begin{theorem}
		Globally optimal outer measures exist.
	\end{theorem}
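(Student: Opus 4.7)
The plan is to imitate Levin's construction of his universal semimeasure $\bm$, using Lemma~\ref{lem:maintech} in place of the standard enumeration of lower semicomputable semimeasures. Let $\{\theta_k\}_{k\in\N}$ be the enumeration of $\Theta$ furnished by Lemma~\ref{lem:maintech}, and set $s_k=\sum_{q\in\Q^n}\theta_k(\{q\})$. Each $s_k$ is finite because $\theta_k$ is strongly finite, but the $s_k$ are not uniformly bounded on $\Theta$, so the naive weighted sum $\sum_k 2^{-k-1}\theta_k$ can fail to be strongly finite even though it is a lower semicomputable outer measure dominating each $\theta_k$. Most of the work is therefore normalization: I will build a family of ``budgeted'' versions of the $\theta_k$ whose singleton-sums are all bounded by $1$, and take a weighted sum of those.

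For each pair $(k,n)\in\N\times\N_{>0}$ I will construct a strongly finite, lower semicomputable outer measure $\nu_{k,n}$ with $\sum_q\nu_{k,n}(\{q\})\leq 1$ and with $\nu_{k,n}=\theta_k/n$ whenever $n\geq s_k$. Fix an effective enumeration $q_0,q_1,\ldots$ of $\Q^n$, let $S_{k,n}(t)=\sum_{i<t}\hat\theta_k(\{q_i\},t)/n$, and call stage $t$ \emph{safe} if $S_{k,n}(s)\leq 1$ for every $s\leq t$. I define $\hat\nu_{k,n}(A,t)=\hat\theta_k(A,t)/n$ at each safe stage and freeze it at the last safe value thereafter. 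Since $S_{k,n}(t)$ is nondecreasing and tends to $s_k/n$, no freeze ever triggers when $n\geq s_k$, giving $\nu_{k,n}=\theta_k/n$ and hence $\theta_k=n\cdot\nu_{k,n}$. By construction $\hat\nu_{k,n}$ is computable and nondecreasing in $t$, and in the limit $\nu_{k,n}$ is an outer measure: the finite subadditivity of $\hat\theta_k(\cdot,t)$ on $\cP^{<\omega}(\Q^n)$ (established in the proof of Lemma~\ref{lem:maintech}) is inherited by any finite-stage snapshot and lifts to countable subadditivity on $\cP(\R^n)$ via a Lemma~\ref{lem:limit}-style argument.

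With this family in hand, I set
\[\mu \;=\; \sum_{(k,n)\in\N\times\N_{>0}} 2^{-\langle k,n\rangle-1}\,\nu_{k,n}\]
for any computable pairing $\langle\cdot,\cdot\rangle:\N\times\N_{>0}\to\N$. Then $\mu$ is an outer measure supported on $\Q^n$ as a countable sum of such, it is strongly finite because $\sum_q\mu(\{q\})\leq\sum_{k,n}2^{-\langle k,n\rangle-1}<\infty$, and it is lower semicomputable via the partial-sum approximation $\hat\mu(A,t)=\sum_{\langle k,n\rangle\leq t}2^{-\langle k,n\rangle-1}\hat\nu_{k,n}(A,t)$. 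For global optimality, given any strongly finite lower semicomputable $\nu$, I pick an integer $c\geq\nu(\R^n)$ so that $\nu/c\in\Theta$, say $\nu/c=\theta_k$, and then pick any integer $n\geq s_k$. Then $\nu_{k,n}=\theta_k/n=\nu/(cn)$, so $\mu\geq 2^{-\langle k,n\rangle-1}\nu_{k,n}=\beta\nu$ with $\beta=2^{-\langle k,n\rangle-1}/(cn)>0$.

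The main obstacle I expect is verifying that each truncated $\nu_{k,n}$ really is an outer measure on $\cP(\R^n)$ when a freeze does occur, since in that case its values on finite rational sets come from a frozen finite-stage snapshot of $\hat\theta_k$ rather than from $\theta_k$ itself. This calls for a careful re-run of the countable-subadditivity argument from the end of the proof of Lemma~\ref{lem:maintech}, starting from the finite subadditivity of the frozen snapshot and extending to $\cP(\R^n)$ using Lemma~\ref{lem:limit}. Once that is settled, the remaining steps are essentially the effective bookkeeping of Levin's construction.
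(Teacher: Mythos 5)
Your proposal takes the same high-level route as the paper---both invoke the enumeration $\{\theta_k\}$ from Lemma~\ref{lem:maintech}, take a computable weighted sum over it, and establish optimality by rescaling an arbitrary strongly finite, lower semicomputable $\nu$ by an integer $\ge\nu(\R^n)$ to land it in $\Theta$ and then recover a single index $k$. Where you differ is the normalization layer. The paper simply defines $\theta=\sum_{k}2^{-(k+1)}\theta_k$ and asserts without comment that $\theta$ is strongly finite. You observe, correctly, that this is not automatic: membership in $\Theta$ only guarantees $\theta_k(E)\le 1$ for every $E$, whereas the singleton totals $s_k=\sum_q\theta_k(\{q\})$ can be arbitrarily large (for instance, $E\mapsto\min\bigl\{1,\tfrac12|E\cap\{q_0,\dots,q_{N-1}\}|\bigr\}$ lies in $\Theta$ with singleton total $N/2$), and since $\sum_q\theta(\{q\})=\sum_k 2^{-(k+1)}s_k$, convergence is not obvious from the definition of $\Theta$ alone. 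Your $\nu_{k,n}$ family caps each summand's singleton mass at the budget $1$ before weighting, exactly as Levin caps his summands at subprobability, and this is the piece that makes strong finiteness of the final sum genuinely transparent. This is a more careful variant that fills a step the paper elides, at the cost of the extra indexing by $n$ and the freezing construction.

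One point in your truncation needs repair. The safety test $S_{k,n}(t)=\sum_{i<t}\hat\theta_k(\{q_i\},t)/n\le 1$ inspects only the first $t$ singletons, but you then want to conclude $\sum_{q\in\Q^n}\nu_{k,n}(\{q\})\le 1$ from the last safe stage, and that requires controlling $\hat\theta_k(\{q_i\},t)$ for $i\ge t$ as well. You should either argue from the construction of $\hat\theta_k$ in Lemma~\ref{lem:maintech} that at each stage $t$ only finitely many singletons have nonzero $\hat\theta_k$-value (because $M_k$ can have halted on only finitely many inputs within $t$ steps) and enlarge $S_{k,n}(t)$ to sum over exactly those, or simply define $S_{k,n}(t)$ to range over all $q\in\Q^n$ with $\hat\theta_k(\{q\},t)>0$. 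With that small adjustment, the freezing argument, the identity $\nu_{k,n}=\theta_k/n$ for $n\ge s_k$, and the final optimality estimate all go through, and you recover the theorem.
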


	\begin{proof}
		Let $f:\N\times\N\to\N$ be any computable injective monotone function, and define the strongly finite outer measure $\theta:\cP(\R^n)\to[0,1]$ by
		\[\theta(E)=\sum_{k,b\in\N}\frac{\theta_{k,b}(E)}{2^{f(k,b)+1}},\]
		where $\theta_{k,b}$ is defined as in Construction~\ref{con:theta}. This outer measure is supported on $\cP^{<\omega}(\Q^n)$, and the function $\hat{\theta}:\cP^{<\omega}(\Q^n)\times\N\to[0,1]$ given by
		\[\hat{\theta}(A,t)=\sum_{k,b\leq t}\frac{\hat{\theta}_{k,b}(A,t)}{2^{f(k,b)+1}}\]
		is a witness to the lower semicomputability of $\theta$. Hence, $\theta\in\Theta$.

		Let $\mu:\cP(\R^n)\to[0,\infty)$ be any strongly finite, lower semicomputable outer measure, and let \[h=\lceil\mu(\R^n)\rceil.\]
		Then the function $\tilde{\mu}:\cP(\R^n)\to[0,1]$ given by
		\[\tilde{\mu}(E)=\mu(E)/h\]
		is a strongly finite, lower semicomputable outer measure. By Lemma~\ref{lem:maintech}, there are some $k,b\in\N$ such that, for all $E\subseteq \R^n$,
		\[\theta_{k,b}(E)=\tilde{\mu}(E).\]
		We have, for all $E\subseteq\R^n$,
		\[\mu(E)=h\cdot \tilde{\mu}(E)= h\cdot\theta_{k,b}(E)\leq h\cdot 2^{f(k,b)+1}\cdot \theta(E),\]
		so $\theta$ is globally optimal.
	\end{proof}

    We thank an anonymous referee for pointing out the following specific example of a globally optimal outer measure on $\R^n$.
    
    Just as we have ``lifted'' Kolmogorov complexity from $\{0,1\}^*$ to $\Q^n$ via routine encoding, we lift Levin's optimal lower semicomputable subprobability measure $\bm$~\cite{jLevi73,jLevi74} from $\{0,1\}^*$ to $\Q^n$ so that, for all $q\in\Q^n$,
    \[\bm(q)=\sum\limits_{\substack{\pi\\U(\pi)=q}}2^{-|\pi|},\]
    and we define $\mathbf{m}:\mathcal{P}(\R^n)\to[0,1]$ by
    \[\bm(E)=\sum_{q\in E\cap\Q^n}\bm(q).\]
    \begin{theorem}
        $\mathbf{m}:\mathcal{P}(\R^n)\to[0,1]$ is a globally optimal outer measure. 
    \end{theorem}
    \begin{proof}
        It is clear that $\mathbf{m}$ is a strongly finite, lower semicomputable outer measure on $\R^n$. For global optimality, let $\nu$ be any strongly finite, lower semicomputable outer measure on $\R^n$. Then there is some $b$ such that $\sum_{q\in\Q^n}\nu(q)\leq b$, and $\nu/b$ is a subprobability measure. Therefore there is some constant $\alpha$ such that for all $E\subseteq\R^n$,
        \begin{align*}
            \nu(E)&=\nu(E\cap\Q^n)\tag{$\nu$ is supported on $\Q^n$}\\
            &\leq\sum_{q\in E\cap\Q^n}\nu(q)\tag{$\nu$ is countably subadditive}\\
            &\leq \sum_{q\in E\cap\Q^n}\alpha b\mathbf{m}(q)\tag{optimality of $\mathbf{m}$}\\
            &=\alpha b\mathbf{m}(E).\qedhere
        \end{align*}
    \end{proof}
\section{Local Algorithmic Optimality}\label{sec:lao}

	This paper's investigation of algorithmic optimality is primarily driven by a specific outer measure $\bkappa$. To define $\bkappa$, we first define the \emph{Kolmogorov complexity of a set} $E\subseteq\R^n$ to be
	\[\K(E)=\min\{\K(q)\mid q\in E\cap\Q^n\},\]
	where $\min\emptyset=\infty$. That is, $\K(E)$ is the minimum number of bits required to cause the universal prefix Turing machine $U$ to print \emph{some} rational point in $E$. (Shen and Vereschagin~\cite{SheVer02} introduced a similar notion for a different purpose.)

	\begin{definition}[\cite{Lutz17}]
		Define the function $\bkappa:\cP(\R^n)\to[0,1]$ by
		\[\bkappa(E)=2^{-\K(E)},\]
		where $2^{-\infty}=0$, for all $E\subseteq\R^n$.
	\end{definition}

	\begin{observation}[\cite{Lutz17}]
		$\bkappa$ is an outer measure on $\R^n$.
	\end{observation}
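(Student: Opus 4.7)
The plan is to verify directly the three defining properties of an outer measure: vanishing on the empty set, monotonicity, and countable subadditivity. The key observation driving all three is that, since $\K(E)$ is defined as an infimum of $\K(q)$ over $q \in E \cap \Q^n$, the behavior of $\K$ under set operations is dual to that of infima, which translates cleanly through the exponential $E\mapsto 2^{-\K(E)}$.

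First I would handle the empty set. Since $\emptyset \cap \Q^n = \emptyset$, the defining minimum for $\K(\emptyset)$ is over an empty set and so equals $+\infty$ by the usual convention, giving $\bkappa(\emptyset) = 2^{-\infty} = 0$. For monotonicity, if $E \subseteq F$, then $E \cap \Q^n \subseteq F \cap \Q^n$, so the feasible set in the minimum defining $\K(F)$ contains the one defining $\K(E)$. Hence $\K(F) \leq \K(E)$, and applying the decreasing map $x \mapsto 2^{-x}$ yields $\bkappa(E) \leq \bkappa(F)$.

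The countable subadditivity step is where the one small idea comes in. Given $E_0, E_1, \ldots \subseteq \R^n$, the equality $\left(\bigcup_n E_n\right) \cap \Q^n = \bigcup_n (E_n \cap \Q^n)$ yields $\K\!\left(\bigcup_n E_n\right) = \inf_n \K(E_n)$, so
\[
\bkappa\!\left(\bigcup_{n=0}^\infty E_n\right) = 2^{-\inf_n \K(E_n)} = \sup_n 2^{-\K(E_n)} = \sup_n \bkappa(E_n) \leq \sum_{n=0}^\infty \bkappa(E_n),
\]
the last inequality holding because the $\bkappa(E_n)$ are non-negative. This completes the verification.

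There is no serious obstacle here; the only point requiring a moment of care is the empty-set convention for the infimum, and the mild observation that $\bkappa$ is not just subadditive but in fact satisfies $\bkappa\!\left(\bigcup_n E_n\right) = \sup_n \bkappa(E_n)$, which is strictly stronger than the outer-measure requirement.
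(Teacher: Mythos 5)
Your verification is correct, and it is essentially the only proof there is: the paper does not supply a proof of this observation (it is cited to Lutz's earlier work), and the argument is exactly as you lay it out. The three axioms reduce directly to the fact that $\K$ is an infimum over $E\cap\Q^n$, and composing with the antitone map $x\mapsto 2^{-x}$ converts ``$\K$ of a union is the infimum of the $\K$'s'' into the maxitivity property $\bkappa\bigl(\bigcup_n E_n\bigr)=\sup_n\bkappa(E_n)$, which is strictly stronger than countable subadditivity. Your care with the empty-infimum convention $\K(\emptyset)=+\infty$ is the right thing to note, since without it the vanishing-on-empty-set axiom would fail.
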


	Our primary interest in $\bkappa$ is the following connection between classical local fractal dimensions and algorithmic fractal dimensions.

	\begin{observation}[\cite{Lutz17}]\label{obs:kappaloc}
		For all $x\in \R^n$,
		\[\dimloc\bkappa(x)=\dim(x)\]
		and
		\[\Dimloc\bkappa(x)=\Dim(x).\]
	\end{observation}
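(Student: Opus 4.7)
The plan is to derive both equalities from a single identity obtained by unwinding the definitions of $\bkappa$, $\K(\cdot)$ on sets, and $\K_r(\cdot)$ on points. Once we verify that $\K(B(x,2^{-r})) = \K_r(x)$ for every $x \in \R^n$ and every $r \in \N$, the two statements will follow simultaneously by taking $\liminf$ and $\limsup$ of the same real-valued sequence.

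The key step is the identity
\[\K(B(x,2^{-r})) = \K_r(x),\]
which is immediate from the definitions. By the definition of $\K$ applied to a set,
\[\K(B(x,2^{-r})) = \min\{\K(q) \mid q \in B(x,2^{-r}) \cap \Q^n\},\]
and the condition $q \in B(x,2^{-r}) \cap \Q^n$ is precisely the requirement $q \in \Q^n$ and $|q - x| < 2^{-r}$ used to define $\K_r(x)$, so the two minima are over the same set of values.

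Applying the definition of $\bkappa$ then gives $\bkappa(B(x,2^{-r})) = 2^{-\K_r(x)}$, hence
\[\frac{\log\frac{1}{\bkappa(B(x,2^{-r}))}}{r} = \frac{\K_r(x)}{r}.\]
Taking $\liminf_{r\to\infty}$ on both sides yields $\dimloc \bkappa(x) = \dim(x)$, and taking $\limsup_{r\to\infty}$ yields $\Dimloc \bkappa(x) = \Dim(x)$. There is no substantive obstacle here; the statement is essentially a definitional rearrangement, which is presumably why it is recorded as an observation. The conceptual content lies entirely in the earlier choice of $\K(E)$ as the minimum complexity of a rational point in $E$, which is tailored precisely so that the value of $\bkappa$ on balls encodes the precision-$r$ Kolmogorov complexity of the center.
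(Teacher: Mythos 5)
Your proposal is correct and follows the same route as the paper: the paper's proof simply notes the identity $\log\frac{1}{\bkappa(B(x,2^{-r}))}=\K_r(x)$, which is exactly the one you derive by unwinding the definitions of $\K(E)$, $\K_r(x)$, and $\bkappa$. You spell out the definitional unfolding in slightly more detail, but there is no substantive difference.
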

	\begin{proof}
		By \eqref{eq:dim}--\eqref{eq:Dimloc}, it suffices to note that, for all $x\in\R^n$,
		\[\log\frac{1}{\bkappa(B(x,2^{-r}))}=K_r(x).\qedhere\]
	\end{proof}
	
	We next investigate the algorithmic properties of the outer measure $\bkappa$.
	\begin{observation}\label{obs:kappasflc}
		$\bkappa$ is strongly finite and lower semicomputable.
	\end{observation}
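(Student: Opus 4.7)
The plan is to verify the two conditions---strongly finite and lower semicomputable---separately, both by direct unpacking of the definitions together with standard facts about prefix Kolmogorov complexity.

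For strong finiteness, I would first note two singleton facts: $\bkappa(\{q\}) = 2^{-\K(q)}$ for every $q \in \Q^n$, and $\bkappa(E) = 0$ whenever $E \cap \Q^n = \emptyset$ (by the convention $\min\emptyset = +\infty$ and $2^{-\infty} = 0$). The second gives support on $\Q^n$. Summability then follows from Kraft's inequality: because the shortest $U$-programs for distinct rationals form a prefix-free set,
\[\sum_{q \in \Q^n} \bkappa(\{q\}) = \sum_{q \in \Q^n} 2^{-\K(q)} \leq 1 < \infty.\]

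For lower semicomputability, I need finite support on $\Q^n$ together with a computable witness $\hat{\bkappa}$. Finite support is a quick consequence of the same Kraft-type bound: given $\varepsilon > 0$, set $N = \lceil \log(2/\varepsilon)\rceil$ and $A = \{q \in \Q^n : \K(q) \leq N\}$; then $A$ is finite since at most $2^{N+1}$ strings have length $\leq N$, and every $q \notin A$ satisfies $2^{-\K(q)} < \varepsilon$, so $\bkappa(\R^n \setminus A) = \max_{q \in \Q^n \setminus A} 2^{-\K(q)} \leq \varepsilon/2 < \varepsilon$. For the witness, I would use a time-bounded approximation of $\K$. Let $\K^{(t)}(q)$ denote the minimum length of a program $\pi$ with $|\pi| \leq t$ and $U(\pi)=q$ in at most $t$ steps (set to $+\infty$ if no such $\pi$ exists, with $2^{-\infty}=0$). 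Then define
\[\hat{\bkappa}(A, t) = \max_{q \in A} 2^{-\K^{(t)}(q)}\]
for $A \neq \emptyset$, and $\hat{\bkappa}(\emptyset, t) = 0$. This is computable and takes values in $\Q \cap [0,1]$. Since $\K^{(t)}(q)$ is nonincreasing in $t$ and decreases to the finite value $\K(q)$, we get $\hat{\bkappa}(A, s) \leq \hat{\bkappa}(A, t) \leq \bkappa(A)$ for $s \leq t$ and $\lim_{t \to \infty} \hat{\bkappa}(A, t) = \max_{q \in A} 2^{-\K(q)} = \bkappa(A)$, as required.

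I do not anticipate any real obstacle: the statement is essentially a direct translation between the Kolmogorov-complexity definition of $\bkappa$ and the two algorithmic conditions. The only minor subtlety is that the finite set $A$ witnessing finite support is obtained noneffectively (via $\K$, which is not computable), but this is fine because the definition only requires such an $A$ to \emph{exist}.
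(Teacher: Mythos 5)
Your proof is correct and takes essentially the same three-part approach as the paper (support on $\Q^n$ plus Kraft's inequality for strong finiteness; a complexity-threshold set $A$ for finite support; and upper semicomputability of $\K$ for the witness). The only difference is one of detail: where the paper simply asserts that lower semicomputability ``follows immediately from the well known upper semicomputability of the Kolmogorov complexity function,'' you spell out the time-bounded approximation $\hat{\bkappa}(A,t)=\max_{q\in A}2^{-\K^{(t)}(q)}$ and check the monotonicity and limit conditions explicitly, which is a useful unpacking but not a different argument.
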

	\begin{proof}
		It suffices to show three things.
		\begin{enumerate}
			\item
				$\bkappa$ is finitely supported on $\Q^n$. For this, let $\varepsilon>0$. Let
				\[A=\left\{q\in\Q^n\;\middle|\; \K(q)\leq \log\frac{1}{\varepsilon}\right\}.\]
				Then $A$ is a finite subset of $\Q^n$, and
				\begin{align*}
					\K(\R^n\setminus A)&=\min\{\K(q)\mid q\in \Q^n\setminus A\}\\
					&>\log\frac{1}{\varepsilon},
				\end{align*}
				so $\bkappa(\R^n\setminus A)<\varepsilon$.
			\item
				$\displaystyle \sum_{q\in\Q^n}\bkappa(\{q\})<\infty$. For this, just note that
				\[\sum_{q\in\Q^n}\bkappa(\{q\})=\sum_{q\in\Q^n}2^{-\K(q)}\leq 1,\]
				by the Kraft inequality for prefix Kolmogorov complexity.
			\item
				$\bkappa$ is lower semicomputable. This follows immediately from the well known upper semicomputability of the Kolmogorov complexity function.
		\end{enumerate}
	\end{proof}

	\begin{lemma}\label{lem:kappanotgo}
		$\bkappa$ is not globally optimal.
	\end{lemma}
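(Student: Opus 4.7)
The plan is to construct a single strongly finite, lower semicomputable outer measure $\nu$ on $\R^n$ together with a sequence of sets $\{E_N\}_{N\in\N}$ for which $\nu(E_N)/\bkappa(E_N)\geq N$, thereby ruling out any uniform constant $\beta\in(0,\infty)$ satisfying $\bkappa(E)\geq \beta\cdot\nu(E)$ for all $E\subseteq\R^n$.

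The conceptual obstruction to exploit is that $\bkappa(A\cup B)=\max(\bkappa(A),\bkappa(B))$, so $\bkappa$ only ever sees the single minimum-complexity rational in a set. A genuinely additive candidate should therefore overwhelm $\bkappa$ on sets containing many high-complexity rationals of comparable complexity. Accordingly, I will take
\[\nu(E)=\sum_{q\in E\cap\Q^n}2^{-\K(q)}.\]
Strong finiteness and finite support on $\Q^n$ follow from the Kraft inequality $\sum_{q\in\Q^n}2^{-\K(q)}\leq 1$, while monotonicity, $\nu(\emptyset)=0$, and countable subadditivity are immediate from the formula (each $q$ is counted at most once on the left and at least once on the right). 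Lower semicomputability is witnessed by $\hat\nu(A,t)=\sum_{q\in A}2^{-\K_t(q)}$, where $\K_t(q)$ is the standard upper-semicomputable approximation of $\K(q)$ obtained by running the universal machine $U$ for $t$ steps.

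For the witnessing sets, let $N_k=|\{q\in\Q^n:\K(q)=k\}|$, pick any $k_N$ with $N_{k_N}\geq N$, and set $E_N=\{q\in\Q^n:\K(q)=k_N\}$. Then $\bkappa(E_N)=2^{-k_N}$ while $\nu(E_N)=N_{k_N}\cdot 2^{-k_N}\geq N\cdot 2^{-k_N}$, so $\nu(E_N)/\bkappa(E_N)\geq N$, as required. The main obstacle is therefore to verify the combinatorial claim $\sup_k N_k=\infty$. I plan to establish this by a counting argument: prefix-free binary encodings of the first $M$ nonnegative integers give $|\{q\in\Q^n:\K(q)\leq \log_2 M+O(\log\log M)\}|\geq M$, so $|\{q:\K(q)\leq k\}|$ grows super-linearly, whereas a uniform bound $N_k\leq C$ would force only the linear estimate $|\{q:\K(q)\leq k\}|\leq C(k+1)$, a contradiction for large $k$.
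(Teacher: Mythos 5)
Your proof is correct and follows essentially the same route as the paper: you use the identical witness measure $\nu(E)=\sum_{q\in E\cap\Q^n}2^{-\K(q)}$ and the same core observation that $\bkappa$ only registers the single lowest-complexity rational in a set while $\nu$ is genuinely additive. The only difference is in the choice of witnessing sets: the paper takes $E=\{q\in\Q^n:\K(q)>\alpha\}$ and shows directly, via the bound $\K(m,0,\ldots,0)<\log m + 2\log\log m + c$, that a narrow complexity band $(\alpha,\alpha+\gamma)$ with $\gamma = O(\log\alpha)$ already contains at least $2^{\alpha+1}$ rationals, so $\nu(E)/\bkappa(E)>2^{\alpha-\gamma}$. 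You instead take level sets $E_N=\{q:\K(q)=k_N\}$ and argue indirectly that $\sup_k N_k=\infty$ via a counting contradiction (exponential-vs-linear growth of $|\{q:\K(q)\leq k\}|$). Your counting argument is sound and gives a cleaner contradiction, though the paper's version yields an explicit, quantitative rate at which the ratio blows up; both are perfectly adequate since the lemma only needs the ratio to be unbounded.
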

	\begin{proof}
		Define the function $\nu:\cP(\R^n)\to[0,\infty]$ by
		\[\nu(E)=\sum_{q\in E\cap\Q^n}2^{-\K(q)}.\]
		We now show that $\nu$ is a strongly finite, lower semicomputable outer measure on $\R^n$. It is clear that $\nu$ is an outer measure on $\R^n$. It thus suffices to prove that $\nu$ has the properties 1, 2, and 3 proven for $\bkappa$ in the proof of Observation~\ref{obs:kappasflc}. For properties 2 and 3, the proofs for $\nu$ are identical to those for $\bkappa$. For property 1, that $\nu$ is finitely supported on $\Q^n$, let $\varepsilon>0$. By the Kraft inequality for prefix Kolmogorov complexity,
		\[\sum_{q\in\Q^n}2^{-\K(q)}\leq 1,\]
		so there is a finite set $A\subseteq\Q^n$ such that
		\[\nu(\R^n\setminus A)=\sum_{q\in\Q^n\setminus A}2^{-\K(q)}<\varepsilon.\]
		Hence, to prove the lemma, it suffices to exhibit, for all $\beta\in(0,\infty)$, a set $E\subseteq\R^n$ such that, 
		\begin{equation}\label{eq:kappanotgo}
			\bkappa(E)<\beta\nu (E).
		\end{equation}

		Let $\beta\in(0,\infty)$. Let $c$ be a constant such that, for all $m\in\N$,
		\[\K(m,\underbrace{0,\ldots,0}_{n-1})< \log(m)+2\log\log(m)+c\,.\]
		Let $\alpha\in(0,\infty)$ be some parameter, let $\gamma=2+2\log(\alpha+2)+c$,
		and define the set
		\[E=\left\{q\in\Q^n\;\middle|\;\K(q)>\alpha\right\}.\]
		Then $\bkappa(E)< 2^{-\alpha}$, and
		\begin{align*}
			\nu (E)&= \sum_{q\in E\cap\Q^n}2^{-\K(q)}\\
			&\geq\sum\limits_{\substack{q\in\Q^n\\\K(q)\in(\alpha,\alpha+\gamma)}}2^{-\K(q)}\\
			&\geq 2^{-\alpha-\gamma}\cdot|\{q\in\Q^n\mid \K(q)\in(\alpha,\alpha+\gamma)\}|\,.
		\end{align*}
		There are fewer than $2^{\alpha+1}$ rational points $q$ with $\K(q)\leq\alpha$. For all $m\in\N$ such that $m\leq 2^{\alpha+2}$,
		\begin{align*}
			\K(m,0,\ldots,0)
			&< \alpha+2+2\log(\alpha+2)+c\\
			&=\alpha+\gamma\,,
		\end{align*}
		so there are at least $2^{\alpha+1}$ rational points $q$ with $\K(q)\in(\alpha,\alpha+\gamma)$, and we have $\nu (E)> 2^{-\gamma}$.
		Thus,
		\[\bkappa(E)<2^{\gamma-\alpha}\nu (E).\]
		Choosing $\alpha$ such that
		\[\gamma-\alpha=2+2\log(\alpha+2)+c-\alpha<\log\beta\]
		yields~\eqref{eq:kappanotgo}.
	\end{proof}

	Notwithstanding Lemma~\ref{lem:kappanotgo}, $\bkappa$ does have an optimality property, which we next define. For each $m=(m_1,\ldots,m_n)\in\Z^n$, let
	\[Q_m=[m_1,m_1+1)\times\cdots\times[m_n,m_n+1)\]
	be the \emph{unit cube} at $m$. For each such $m$ and each $r\in\N$, let
	\[Q^{(r)}_m=2^{-r}Q_m=\left\{2^{-r}x\;\middle|\; x\in Q_m\right\}\]
	be the \emph{$r$-dyadic cube} with \emph{address} $m$. Note that each $Q^{(r)}_m$ is ``half-closed, half-open'' in such a way that, for each $r\in\N$, the family
	\[\cQ^{(r)}=\left\{Q^{(r)}_m\;\middle|\;m\in\Z^n\right\}\]
	is a partition of $\R^n$.

	\begin{definition}
		Let $\mu$ and $\nu$ be outer measures on $\R^n$, and let $\cA=(\cA^{(r)}\mid r\in\N)$ be a sequence of families $\cA^{(r)}\subseteq\cP(\R^n)$ of subsets of $\R^n$. We say that $\mu$ \emph{dominates} $\nu$ on $\cA$ if there is a function $\beta:\N\to(0,\infty)$ such that $\beta(r)=2^{-o(r)}$ as $r\to\infty$ and, for every $r\in\N$ and every set $E\in\cA^{(r)}$,
		\[\mu(E)\geq\beta(r)\cdot\nu(E).\]

		We say that $\mu$ \emph{dominates $\nu$ on dyadic cubes} if $\mu$ dominates $\nu$ on $\cQ=(\cQ^{(r)}\mid r\in\N)$. We say that $\mu$ \emph{dominates $\nu$ on balls} if $\mu$ dominates $\nu$ on $\cB=(\cB^{(r)}\mid r\in\N)$, where $\cB^{(r)}$ is the set of all open balls of radius $2^{-r}$ in $\R^n$.
	\end{definition}
	
	\begin{definition}
		An outer measure $\mu$ on $\R^n$ is \emph{locally optimal} if the following two properties hold.
		\begin{enumerate}[(i)]
			\item $\mu$ is strongly finite and lower semicomputable.
			\item\label{it:locopt2} For every strongly finite, lower semicomputable outer measure $\nu$ on $\R^n$, $\mu$ dominates $\nu$ on dyadic cubes.
		\end{enumerate}
	\end{definition}

	\begin{theorem}\label{thm:kappalocaloptimal}
		The outer measure $\bkappa$ is locally optimal.
	\end{theorem}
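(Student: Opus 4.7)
The plan is to verify the two clauses in the definition of local optimality for $\bkappa$. Clause (i) is already Observation~\ref{obs:kappasflc}, so the content of the theorem reduces to clause (ii): given an arbitrary strongly finite, lower semicomputable outer measure $\nu$ on $\R^n$, I must produce a function $\beta:\N\to(0,\infty)$ with $\beta(r) = 2^{-o(r)}$ such that $\bkappa(Q^{(r)}_m)\geq\beta(r)\cdot\nu(Q^{(r)}_m)$ for every $r\in\N$ and $m\in\Z^n$. By the definition of $\bkappa$, this is equivalent to the Kolmogorov-complexity bound
\[
    \K(Q^{(r)}_m) \leq \log\frac{1}{\nu(Q^{(r)}_m)} + o(r)
\]
uniformly in $m$ whenever $\nu(Q^{(r)}_m)>0$, with cubes of $\nu$-measure zero imposing no constraint.

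The main tool will be a Levin-style coding-theorem argument. Since $\nu$ is lower semicomputable and finitely supported on $\Q^n$, Lemma~\ref{lem:limit} shows that the map $(r,m)\mapsto\nu(Q^{(r)}_m)$ is lower semicomputable uniformly in both arguments. Because each family $\cQ^{(r)}$ partitions $\R^n$, the weight
\[
    p(r,m) = \frac{\nu(Q^{(r)}_m)}{(r+1)^2}
\]
is a lower semicomputable nonnegative function on $\N\times\Z^n$ of total mass at most $\frac{\pi^2}{6}\,\nu(\R^n)<\infty$. Levin's coding theorem for lower semicomputable summable weights therefore produces a constant $c_\nu$ (depending only on $\nu$) such that
\[
    \K(r,m)\leq\log\frac{1}{\nu(Q^{(r)}_m)}+2\log(r+1)+c_\nu
\]
for every pair with $\nu(Q^{(r)}_m)>0$.

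To convert this into the required bound on $\K(Q^{(r)}_m)$, use the center $q_{r,m} = 2^{-r}(m_1+\tfrac12,\dots,m_n+\tfrac12)$ of the cube, which is a rational point of $Q^{(r)}_m$ and is computable from the pair $(r,m)$. Hence $\K(Q^{(r)}_m)\leq\K(q_{r,m})\leq\K(r,m)+O(1)$, and combining with the previous display yields the desired inequality with an $o(r)$-overhead of $2\log(r+1)+O(1)$. Setting $\beta(r) = (r+1)^{-2}\cdot 2^{-O(1)}$ satisfies $\beta(r)=2^{-o(r)}$ and thus verifies clause (ii).

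The one step I expect to require care is the invocation of the coding theorem for the non-normalized weight $p$: because $\nu(\R^n)$ need not be computable, $p$ cannot be explicitly rescaled to a subprobability semimeasure. This is handled by the standard extension of the coding theorem via a universal lower semicomputable semimeasure, which dominates every lower semicomputable summable function up to a constant depending only on its enumeration algorithm; that constant becomes $c_\nu$. Everything else is a direct translation between the Kolmogorov-complexity definition of $\bkappa$ and the outer-measure language of the dominance property.
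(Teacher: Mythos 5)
Your argument reaches the right conclusion but takes a genuinely different route from the paper. The paper applies the LDS coding theorem of Case and Lutz, which bounds $\K(Q)$ by $\log\frac{1}{\bm(Q)}+\K(r)+O(1)$ for dyadic cubes, and then invokes the optimality of Levin's $\bm$ over the semimeasure $q\mapsto\nu(\{q\})$ together with countable subadditivity of $\nu$. You skip $\bm$ and the LDS theorem entirely: you observe that $(r,m)\mapsto\nu(Q^{(r)}_m)$ is uniformly lower semicomputable (a correct consequence of Lemma~\ref{lem:limit}, condition (i) of lower semicomputability, and decidability of $q\in Q^{(r)}_m$), normalize by $(r+1)^2$ to make the weight summable over both indices, and apply the ordinary coding theorem on $\N\times\Z^n$, finally passing from $\K(r,m)$ to $\K(Q^{(r)}_m)$ via the computable cube center. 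This is a clean, self-contained alternative that effectively re-derives the small piece of the Case--Lutz machinery the paper needs; the overhead $2\log(r+1)+O(1)$ is comparable to the paper's $\K(r)+O(1)$, and both are $o(r)$.

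One step is misjustified, though the statement is salvageable. You argue that the total mass $\sum_{r,m} p(r,m)$ is finite because each $\cQ^{(r)}$ partitions $\R^n$, bounding $\sum_m\nu(Q^{(r)}_m)$ by $\nu(\R^n)$. But $\nu$ is only an outer measure: subadditivity gives $\nu(\R^n)\le\sum_m\nu(Q^{(r)}_m)$, the reverse of what you need, and the cubes need not be $\nu$-measurable, so additivity is not available. The correct bound uses strong finiteness, which you never actually invoke. Since $\nu$ is supported on $\Q^n$ and countably subadditive, $\nu(Q^{(r)}_m)=\nu(Q^{(r)}_m\cap\Q^n)\le\sum_{q\in Q^{(r)}_m\cap\Q^n}\nu(\{q\})$, and because $\cQ^{(r)}$ partitions $\R^n$ we get
\[
\sum_{m\in\Z^n}\nu\bigl(Q^{(r)}_m\bigr)\;\le\;\sum_{q\in\Q^n}\nu(\{q\})\;<\;\infty
\]
with the last inequality being precisely the definition of strongly finite. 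With that substitution, the rest of your argument goes through.
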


	\begin{proof}
		We rely on machinery created for a different purpose by Case and the first author~\cite{CasLut15}. The LDS coding theorem of~\cite{CasLut15} is a mild generalization of Levin's coding theorem~\cite{jLevi73,jLevi74} that tells us that there is a constant $c\in\N$ such that, for all $r\in\N$ and $Q\in\cQ^{(r)}$,
		\begin{equation}\label{eq:lds}
			\K(Q)\leq \log\frac{1}{\bm(Q)}+\K(r)+c.
		\end{equation}
		To prove the present theorem, it suffices by Observation~\ref{obs:kappasflc} to prove that $\bkappa$ satisfies property~\ref{it:locopt2} of the definition of local optimality. For this, let $\nu$ be a strongly finite, lower semicomputable outer measure on $\R^n$. Define $p_\nu:\Q^n\to[0,\infty]$ by $p_\nu(q)=\nu(\{q\})$ for all $q\in\Q^n$. Then $p_\nu$ is lower semicomputable and $\sum_{q\in\Q^n}p_\nu(q)<\infty$, so the optimality property of $\bm$ tells us that there is a constant $\alpha\in(0,\infty)$ such that, for all $q\in\Q^n$,
		\begin{equation}\label{eq:mopt}
			\bm(q)\geq\alpha p_\nu(q).
		\end{equation}

		Define $\beta:\N\to(0,\infty)$ by
		\[\beta(r)=\alpha 2^{-(\K(r)+c)}\]
		for all $r\in\N$. Then
		\[\lim_{r\to\infty}\frac{\log\frac{1}{\beta(r)}}{r}=\lim_{r\to\infty}\frac{\log\frac{1}{\alpha}+\K(r)+c}{r}=0,\]
		so $\beta(r)=2^{-o(r)}$ as $r\to\infty$. Also, for all $r\in\N$ and $Q\in\cQ^{(r)}$,~\eqref{eq:lds},~\eqref{eq:mopt}, and the countable subadditivity of $\nu$ tell us that
		\begin{align*}
			\bkappa(Q)&=2^{-\K(Q)}\\
			&\geq 2^{-(\K(r)+c)}\bm(Q)\\
			&=2^{-(\K(r)+c)}\sum_{q\in Q\cap\Q^n}\bm(q)\\
			&\geq\beta(r)\sum_{q\in Q\cap\Q^n}p_\nu(q)\\
			&=\beta(r)\sum_{q\in Q\cap\Q^n}\nu(\{q\})\\
			&\geq \beta(r)\nu(Q).
		\end{align*}
		This shows that $\bkappa$ dominates $\nu$ on dyadic cubes, confirming that $\bkappa$ is locally optimal.
	\end{proof}
	
	\begin{corollary}
		A strongly finite, lower semicomputable outer measure on $\R^n$ is locally optimal if and only if it dominates $\bkappa$ on dyadic cubes.
	\end{corollary}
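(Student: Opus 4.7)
The plan is to observe that the corollary is essentially a transitivity statement about the domination relation, combined with the previous theorem that $\bkappa$ is already locally optimal. So the two directions should split cleanly, with the forward direction being immediate from the definition and the backward direction requiring only that domination on dyadic cubes composes well.

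For the forward direction, suppose $\mu$ is locally optimal. By Observation~\ref{obs:kappasflc}, $\bkappa$ is itself a strongly finite, lower semicomputable outer measure on $\R^n$. Then condition~\ref{it:locopt2} in the definition of local optimality, applied with $\nu = \bkappa$, immediately gives that $\mu$ dominates $\bkappa$ on dyadic cubes.

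For the backward direction, assume $\mu$ is strongly finite, lower semicomputable, and dominates $\bkappa$ on dyadic cubes via some function $\beta_1:\N\to(0,\infty)$ with $\beta_1(r)=2^{-o(r)}$. Let $\nu$ be any strongly finite, lower semicomputable outer measure on $\R^n$. By the preceding theorem, $\bkappa$ is locally optimal, so $\bkappa$ dominates $\nu$ on dyadic cubes via some function $\beta_2:\N\to(0,\infty)$ with $\beta_2(r)=2^{-o(r)}$. Then for every $r\in\N$ and every $Q\in\cQ^{(r)}$, chaining the two inequalities yields
\[\mu(Q)\geq \beta_1(r)\cdot\bkappa(Q)\geq \beta_1(r)\beta_2(r)\cdot\nu(Q).\]
Defining $\beta(r)=\beta_1(r)\beta_2(r)$, we have
\[\frac{\log\frac{1}{\beta(r)}}{r}=\frac{\log\frac{1}{\beta_1(r)}}{r}+\frac{\log\frac{1}{\beta_2(r)}}{r}\to 0,\]
so $\beta(r)=2^{-o(r)}$ as $r\to\infty$. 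Hence $\mu$ dominates $\nu$ on dyadic cubes, verifying condition~\ref{it:locopt2} and showing that $\mu$ is locally optimal.

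There is no real obstacle here; the only thing worth being careful about is confirming that the product of two $2^{-o(r)}$ functions is again $2^{-o(r)}$, which is immediate from the additivity of the exponents. The proof is essentially a one-line composition argument wrapped around the previous theorem.
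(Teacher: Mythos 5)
Your proof is correct and is exactly the argument the paper leaves implicit: the forward direction is a direct application of the definition of local optimality with $\nu=\bkappa$ (using Observation~\ref{obs:kappasflc}), and the backward direction is a transitivity chain through the preceding theorem that $\bkappa$ is locally optimal, with the routine check that the product of two $2^{-o(r)}$ functions is again $2^{-o(r)}$.
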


	\begin{lemma}\label{lem:ballcube}
		There is a constant $c\in\N$ such that, for every $r\in\N$, every $r$-dyadic cube $Q\in\cQ^{(r)}$, and every open ball $B\subseteq\R^n$ of radius $2^{-r}$ that intersects $Q$,
		\[|\K(B)-\K(Q)|\leq\K(r)+c.\]
	\end{lemma}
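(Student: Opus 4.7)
My plan is a grid-shift argument, carried out once in each direction of the absolute value. The Kolmogorov ingredient is just the routine subadditivity $\K(q+g)\leq\K(q)+\K(g)+O(1)$ for rationals; all other work is geometric.

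Fix constants $\ell,L\in\N$ depending only on $n$ with $2^\ell>\sqrt{n}$ and $L=(\sqrt{n}+2)\,2^\ell$, and for each $r\in\N$ define the finite rational grid
\[G_r=\{k\cdot 2^{-r-\ell}:k\in\Z^n,\ |k|_\infty\leq L\}.\]
Because $\ell$ and $L$ are constants, each $g\in G_r$ is computable from $r$ together with $O(1)$ side bits, so $\K(g)\leq\K(r)+O(1)$ and hence $\K(q+g)\leq\K(q)+\K(r)+O(1)$ for all $q\in\Q^n$ and $g\in G_r$.

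For $\K(B)\leq\K(Q)+\K(r)+c$, let $q^*\in Q\cap\Q^n$ realize $\K(Q)$ and let $c_B$ be the center of $B$. Since $B$ meets $Q$, $|c_B-q^*|\leq(\sqrt{n}+1)\,2^{-r}$. Rounding $(c_B-q^*)/2^{-r-\ell}$ coordinate-wise to an integer vector $k$ gives $|k|_\infty\leq L$, and the corresponding $g=k\cdot 2^{-r-\ell}\in G_r$ satisfies $|q^*+g-c_B|\leq\sqrt{n}\cdot 2^{-r-\ell}/2<2^{-r}$ by the choice of $\ell$, so $q^*+g\in B$. Hence $\K(B)\leq\K(q^*+g)\leq\K(Q)+\K(r)+O(1)$.

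The reverse inequality is symmetric: starting from a minimum-complexity rational $q^*\in B$ and the center $c_Q$ of $Q$, the same rounding produces $g\in G_r$ with $|q^*+g-c_Q|_\infty\leq 2^{-r-\ell}/2<2^{-r}/2$, placing $q^*+g$ in the half-open cube $Q$. The principal difficulty is combinatorial rather than complexity-theoretic: choosing a single $\ell$ that simultaneously meets the Euclidean condition $2^\ell>\sqrt{n}$ (for ball inclusion) and the $\ell_\infty$ condition $\ell\geq 1$ (for cube inclusion), and a single $L$ large enough that the rounded integer vector $k$ stays inside $G_r$ in both directions. Once these constants are pinned down, the whole lemma reduces to two near-identical rounding estimates and the subadditivity bound above.
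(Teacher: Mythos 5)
Your plan is correct and, once the rounding estimates are written out, gives a complete proof. The underlying idea is the same as the paper's: a minimum-complexity rational in one region can be translated into the other region by a shift that is computable from $r$ plus $O(1)$ additional bits, and prefix subadditivity ($\K(q+g)\leq\K(q)+\K(g)+O(1)$) then converts this into the stated complexity bound. The execution differs, though. The paper proves only the inequality $\K(Q)\leq\K(B)+\K(r)+O(1)$ directly, by building an explicit prefix machine $M$ that enumerates a $5^n$-point grid of dyadic corners $\prod_i\{2^{-r}(\lfloor 2^rq_i\rfloor+j) : |j|\leq 2\}$ around a floored ball point; the reverse inequality $\K(B)\leq\K(Q)+\K(r)+O(1)$ is cited from Lemma~3.5 of Case and Lutz~\cite{CasLut15}. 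You instead treat both directions symmetrically with a single finer-scale rounding grid $G_r$ at scale $2^{-r-\ell}$, rounding the displacement between $q^*$ and the target center to a vector of bounded $\ell_\infty$-norm and checking that the result lands in $B$ (Euclidean bound, needing $2^\ell>\sqrt n$) or in the half-open cube $Q$ ($\ell_\infty$ bound, needing $\ell\geq 1$). This buys a self-contained, uniform two-sided argument that avoids the external citation, at the modest cost of one extra resolution level and a somewhat larger index set than $5^n$. Two cosmetic points: $L=(\sqrt n+2)2^\ell$ should be replaced by its ceiling to live in $\N$, and the bound $|c_B-q^*|\leq(\sqrt n+1)2^{-r}$ is really a strict inequality; neither affects the argument.
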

	\begin{proof}
		Lemma 3.5 of~\cite{CasLut15} gives us a constant $c_1\in\N$ such that, for all $r$, $Q$, and $B$ as in the present lemma,
		\[\K(B)\leq\K(Q)+\K(r)+c_1.\]
		Hence it suffices to show that there is a constant $c_2$ such that, for all $r$, $Q$, and $B$ as in the present lemma,
		\[\K(Q)\leq\K(B)+\K(r)+c_2.\]
		Let $M$ be a prefix Turing machine that, on input $\pi_1\pi_2\pi_3$ where $U(\pi_1)=r\in\N$, and $U(\pi_2)=k\in\N$, and $U(\pi_3)=(q_1,\ldots,q_n)\in\Q^n$, outputs the lexicographically $k$\textsuperscript{th} point in the product set
		\[\prod_{i=1}^n\left\{2^{-r}(\lfloor 2^{r}q_i\rfloor-2),2^{-r}(\lfloor 2^{r}q_i\rfloor-1),2^{-r}\lfloor 2^{r}q_i\rfloor,2^{-r}(\lfloor 2^{r}q_i\rfloor+1),2^{-r}(\lfloor 2^{r}q_i\rfloor+2)\right\}.\]

		Let $r$, $Q$, and $B$ be as in the present lemma. Let $q\in B\cap\Q^n$ be such that $\K(q)=\K(B)$. Then there is some point $p=(p_1,\ldots,p_n)\in Q\cap B\cap\Q^n$ such that $|p-q|<2^{1-r}$. Hence $Q$ is the $r$-dyadic cube with address $(\lfloor 2^rp_1\rfloor,\ldots,\lfloor 2^rp_n\rfloor)$, and for each $1\leq i\leq n$,
		\[|\lfloor 2^rq_i\rfloor-\lfloor 2^rp_i\rfloor|\leq 2.\]
		That is, the address of $Q$ belongs to the product set
		\[\prod_{i=1}^n\left\{\lfloor 2^{r}q_i\rfloor-2,\lfloor 2^{r}q_i\rfloor-1,\lfloor 2^{r}q_i\rfloor,\lfloor 2^{r}q_i\rfloor+1,\lfloor 2^{r}q_i\rfloor+2\right\}.\]
		Let $k\leq 5^n$ be the lexicographical index of $Q$'s address within this set, and let $\pi_1$, $\pi_2$, and $\pi_3$ be witnesses to $\K(r)$, $\K(k)$, and $\K(q)$, respectively. Then $M(\pi_1\pi_2\pi_3)\in Q$, so letting $c_M$ be an optimality constant for the machine $M$, we have
		\begin{align*}
			\K(Q)&\leq |\pi_1|+|\pi_2|+|\pi_3|+c_M\\
			&=\K(r)+\K(k)+\K(q)+c_M\\
			&=\K(B)+\K(r)+\K(k)+c_M\,.
		\end{align*}
		Since $k\leq 5^n$, there is some constant $c_3$ such that $\K(k)\leq 2\log(5^n)+c_3$, so the constant
		\[c_2=c_M+2\log(5^n)+c_3\]
		affirms the lemma.
	\end{proof}

	\begin{lemma}
		A strongly finite, lower semicomputable outer measure $\mu$ dominates $\bkappa$ on balls if and only if it dominates $\bkappa$ on dyadic cubes.
	\end{lemma}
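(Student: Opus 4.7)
\emph{Proof proposal.} The plan is to prove both implications symmetrically by inscribing a set of one type inside the other at essentially the same scale (up to a constant depending only on $n$), showing that the two sets have $\bkappa$-values within a factor $2^{O(\log r)}$ of each other, and then appealing to monotonicity of $\mu$.

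For $(\Rightarrow)$, suppose $\mu$ dominates $\bkappa$ on balls. Given $Q \in \cQ^{(r)}$, I would inscribe the ball $B = B(c_Q, 2^{-r-1}) \subseteq Q$, where $c_Q$ is the rational center of $Q$. From any rational $q \in Q$ together with $r$ one computes the address of $Q$ and hence $c_Q$, so $\K(c_Q) \leq \K(Q) + \K(r) + O(1)$. Since $c_Q \in B$, this gives $\K(B) \leq \K(c_Q)$ and thus $\bkappa(B) \geq 2^{-O(\log r)}\bkappa(Q)$. The ball hypothesis at scale $r+1$ together with monotonicity then yields $\mu(Q) \geq \mu(B) \geq 2^{-o(r)}\bkappa(Q)$.

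For $(\Leftarrow)$, suppose $\mu$ dominates $\bkappa$ on dyadic cubes. Given an open ball $B$ of radius $2^{-r}$ centered at $c_B$, fix a constant $s = s(n) \in \N$ with $\sqrt{n}\cdot 2^{-s} < 1$, and let $Q^*$ be the $(r+s)$-dyadic cube containing $c_B$. Every point of $Q^*$ lies within $\sqrt{n}\cdot 2^{-r-s} < 2^{-r}$ of $c_B$, so $Q^* \subseteq B$. To bound $\K(Q^*)$, let $q^* \in B \cap \Q^n$ realize $\K(B)$; since $|c_B - q^*| < 2^{-r}$, the integer addresses of the $(r+s)$-dyadic cubes containing $c_B$ and $q^*$ differ coordinatewise by at most a constant depending only on $n$. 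Hence the rational center of $Q^*$ is describable from $q^*$, $r+s$, and a constant-size offset, yielding $\K(Q^*) \leq \K(B) + \K(r) + O(1)$ and thus $\bkappa(Q^*) \geq 2^{-O(\log r)}\bkappa(B)$. The cube hypothesis at scale $r+s$ together with monotonicity then gives $\mu(B) \geq \mu(Q^*) \geq 2^{-o(r)}\bkappa(B)$.

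The main obstacle is the complexity bound on $Q^*$ in $(\Leftarrow)$: we cannot effectively identify $Q^*$ from $q^*$, since the correct offset depends on $c_B$ and the complexity of $c_B$ is not controlled by $\K(B)$. The resolution is that we need only the \emph{existence} of a short description of $Q^*$, not an effective one; because the offset lies in a finite set whose cardinality depends only on $n$, it can be hardcoded as $O(1)$ bits in such a description, mirroring the enumeration-and-index trick used in the proof of Lemma~\ref{lem:ballcube}.
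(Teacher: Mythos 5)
Your proposal is correct and takes essentially the same approach as the paper: inscribe a ball in the cube (or a smaller dyadic cube in the ball) at a scale shifted by a constant depending only on $n$, apply monotonicity of $\mu$, and absorb a $\K(r)+O(1)$ complexity gap between the two sets into the $2^{-o(r)}$ factor. The only difference is stylistic: the paper invokes the two-sided estimate of Lemma~\ref{lem:ballcube} for the $\K(B)$-versus-$\K(Q)$ comparison, whereas you re-derive the (one-sided) bounds needed for each direction inline using the same center-plus-hardcoded-offset trick that underlies Lemma~\ref{lem:ballcube}.
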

	\begin{proof}
		Suppose that $\mu$ dominates $\bkappa$ on balls. Then for every $x\in\R^n$,
		\[\mu(B(x,2^{-r}))=2^{-o(r)}\bkappa(B(x,2^{-r})).\]
		Let $r\in\Z$, and let $Q$ be an $r$-dyadic cube with center $q$, and let $B=B(q,2^{-r-1})$, so that $B\subseteq Q$. Then, applying Lemma~\ref{lem:ballcube},
		\begin{align*}
			\mu(Q)&\geq\mu(B)\\
			&= 2^{-o(r)}\bkappa(B)\\
			&= 2^{-\K(B)-o(r)}\\
			&=2^{-\K(Q)-o(r)}\\
			&=2^{-o(r)}\bkappa(Q),
		\end{align*}
		so $\mu$ dominates $\bkappa$ on dyadic cubes.

		Now suppose that $\mu$ dominates $\bkappa$ on dyadic cubes. Then for every $r$-dyadic cube $Q$,
		\[\mu(Q)=2^{-o(r)}\bkappa(Q).\]
		Let $r\in\Z$, $x\in\R^n$, and $B=B(x,2^{-r})$. Let $Q$ be the $\big(r+\big\lceil\log\sqrt{n}\big\rceil\big)$-dyadic cube containing $x$, so that $Q\subseteq B$.
		Applying Lemma~\ref{lem:ballcube},
		\begin{align*}
			\mu(B)&\geq\mu(Q)\\
			&= 2^{-o(r)}\bkappa(Q)\\
			&= 2^{-\K(Q)-o(r)}\\
			&=2^{-\K(B)-o(r)}\\
			&=2^{-o(r)}\bkappa(B),
		\end{align*}
		so $\mu$ dominates $\bkappa$ on balls.
	\end{proof}

	\begin{corollary}\label{cor:tfaeopt}
		For every strongly finite, lower semicomputable outer measure $\mu$ on $\R^n$, the following three conditions are equivalent.
		\begin{enumerate}[(1)]
			\item $\mu$ is locally optimal.
			\item $\mu$ dominates $\bkappa$ on balls.
			\item For every strongly finite, lower semicomputable outer measure $\nu$ on $\R^n$, $\mu$ dominates $\nu$ on balls.
		\end{enumerate}
	\end{corollary}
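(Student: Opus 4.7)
The plan is to close a cycle of implications, with (1)$\Leftrightarrow$(2) and (3)$\Rightarrow$(2) falling out immediately from material just established, leaving (2)$\Rightarrow$(3) as the only substantive direction.

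For (1)$\Leftrightarrow$(2), I would simply compose the two results immediately preceding this corollary: the corollary just before Lemma~\ref{lem:ballcube} says a strongly finite, lower semicomputable $\mu$ is locally optimal iff it dominates $\bkappa$ on dyadic cubes, and the lemma just proved converts that condition into domination of $\bkappa$ on balls. For (3)$\Rightarrow$(2), I would instantiate (3) at $\nu = \bkappa$, which is legal because Observation~\ref{obs:kappasflc} already told us that $\bkappa$ is strongly finite and lower semicomputable.

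For (2)$\Rightarrow$(3), I would argue by transitivity of domination, factoring through $\bkappa$. The link ``$\mu$ dominates $\bkappa$ on balls'' is exactly the hypothesis (2); the second link, ``$\bkappa$ dominates an arbitrary strongly finite, lower semicomputable $\nu$ on balls,'' requires promoting the local optimality of $\bkappa$ (which is a cube-level statement) to a ball-level statement. To do this for an open ball $B$ of radius $2^{-r}$, I would cover $B$ by the at most $3^n$ $r$-dyadic cubes $Q_1,\ldots,Q_k$ that meet it, apply countable subadditivity of $\nu$ together with the cube-level inequality $\nu(Q_i) \leq \bkappa(Q_i)/\beta_\nu(r)$ where $\beta_\nu(r) = 2^{-o(r)}$ comes from local optimality of $\bkappa$, and then use Lemma~\ref{lem:ballcube} to bound each $\bkappa(Q_i)$ by $2^{\K(r)+c}\bkappa(B)$. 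Since $\K(r) = O(\log r) = o(r)$, the composite factor in front of $\bkappa(B)$ is still of the form $2^{-o(r)}$, so $\bkappa$ dominates $\nu$ on balls, and composing with (2) yields (3).

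The main obstacle I anticipate is precisely this cube-to-ball upgrade for a \emph{general} $\nu$. The preceding ball-cube equivalence lemma was tailored to $\bkappa$ and exploited the identity $\bkappa(E) = 2^{-\K(E)}$ on both sides; here the right side is instead an arbitrary $\nu$, and we recover the transfer only by using countable subadditivity of $\nu$ over a constant-sized dyadic cover and paying the $2^{\K(r)}$ price supplied by Lemma~\ref{lem:ballcube}. Once this is observed, the proof is essentially a bookkeeping exercise in chaining the $2^{-o(r)}$ factors.
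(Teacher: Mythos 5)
Your proof is correct, and the cycle $(1)\Leftrightarrow(2)$, $(3)\Rightarrow(2)$, $(2)\Rightarrow(3)$ is the natural route given the preceding results. The one genuinely nontrivial step, which you correctly isolate, is promoting ``$\bkappa$ dominates $\nu$ on dyadic cubes'' (Theorem on local optimality of $\bkappa$) to ``$\bkappa$ dominates $\nu$ on balls'' for an \emph{arbitrary} strongly finite, lower semicomputable $\nu$; your covering argument --- cover $B$ by the at most $3^n$ intersecting $r$-dyadic cubes, apply subadditivity of $\nu$, use the cube-level domination, and then transfer $\bkappa(Q_i)$ to $\bkappa(B)$ at a cost of $2^{\K(r)+c}$ via Lemma~\ref{lem:ballcube} --- handles it correctly and is essentially forced, since a single-containment argument of the kind used in Lemma~\ref{lem:ballcube} only works when $\bkappa$ itself is the measure being evaluated on both the ball and the cube. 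Once that is in hand, the remaining directions really are one-liners from the corollary preceding Lemma~\ref{lem:ballcube}, Lemma~\ref{lem:ballcube}'s companion lemma, and Observation~\ref{obs:kappasflc}. No gaps.
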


	We now have everything we need to prove our main theorem, which is the following generalization of Observation~\ref{obs:kappaloc}.

	\begin{theorem}\label{thm:main}
		If $\mu$ is any locally optimal outer measure on $\R^n$, then for all $x\in \R^n$,
		\begin{equation}\label{eq:dimlocdim}
			\dimloc\mu(x)=\dim(x)
		\end{equation}
		and
		\begin{equation}\label{eq:DimlocDim}
			\Dimloc\mu(x)=\Dim(x).
		\end{equation}
	\end{theorem}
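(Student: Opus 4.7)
The plan is to use Corollary~\ref{cor:tfaeopt} twice, once for $\mu$ and once for $\bkappa$, to sandwich $\mu(B(x,2^{-r}))$ against $\bkappa(B(x,2^{-r}))$ to within a $2^{\pm o(r)}$ factor, and then to apply Observation~\ref{obs:kappaloc} to convert the $\bkappa$-mass of balls into Kolmogorov complexities.

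First, since $\mu$ is locally optimal, Corollary~\ref{cor:tfaeopt}(2) gives a function $\beta_1:\N\to(0,\infty)$ with $\beta_1(r)=2^{-o(r)}$ such that $\mu(B)\geq \beta_1(r)\,\bkappa(B)$ for every open ball $B$ of radius $2^{-r}$. Conversely, since $\bkappa$ is itself locally optimal (established just above) and $\mu$ is strongly finite and lower semicomputable, Corollary~\ref{cor:tfaeopt}(3) gives a function $\beta_2:\N\to(0,\infty)$ with $\beta_2(r)=2^{-o(r)}$ such that $\bkappa(B)\geq\beta_2(r)\,\mu(B)$ for every such ball.

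Combining these two inequalities and taking base-$2$ logarithms, we obtain, for each $x\in\R^n$ and each $r\in\N$,
\[
\log\frac{1}{\bkappa(B(x,2^{-r}))}-\log\frac{1}{\beta_2(r)}\;\leq\;\log\frac{1}{\mu(B(x,2^{-r}))}\;\leq\;\log\frac{1}{\bkappa(B(x,2^{-r}))}+\log\frac{1}{\beta_1(r)}.
\]
By the identity $\log(1/\bkappa(B(x,2^{-r})))=\K_r(x)$ from the proof of Observation~\ref{obs:kappaloc}, and since $\log(1/\beta_i(r))=o(r)$, this rewrites as
\[
\log\frac{1}{\mu(B(x,2^{-r}))}=\K_r(x)+o(r).
\]
Dividing by $r$ and taking $\liminf$ and $\limsup$ as $r\to\infty$ then yields~\eqref{eq:dimlocdim} and~\eqref{eq:DimlocDim} directly from the definitions~\eqref{eq:dim}--\eqref{eq:Dimloc}.

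There is essentially no obstacle here, because the hard work has already been done: Corollary~\ref{cor:tfaeopt} reduces local optimality to a symmetric two-sided $\bkappa$-comparison on balls, and Lemma~\ref{lem:ballcube} (used in the corollary's derivation) absorbs the cube-to-ball discrepancy into the $o(r)$ error term. The only point to watch is that the $o(r)$ bounds on $\log(1/\beta_1(r))$ and $\log(1/\beta_2(r))$ are uniform in $x$, so they can be absorbed into the $\liminf$/$\limsup$ without depending on the point.
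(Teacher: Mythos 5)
Your proof is correct and follows essentially the same route as the paper: invoke Corollary~\ref{cor:tfaeopt} to get the two-sided domination on balls between $\mu$ and $\bkappa$, combine the bounds into a single $o(r)$ discrepancy on logarithms, and then use $\log(1/\bkappa(B(x,2^{-r})))=\K_r(x)$ from Observation~\ref{obs:kappaloc} to conclude. Your closing remark about uniformity of the $o(r)$ bound in $x$ is the right thing to check and is indeed satisfied here.
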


	\begin{proof}
		Let $\mu$ be any locally optimal outer measure on $\R^n$. By Theorem~\ref{thm:kappalocaloptimal} and Corollary~\ref{cor:tfaeopt}, $\bkappa$ dominates $\mu$ on balls, and $\mu$ dominates $\bkappa$ on balls. That is, there exist two function $\beta_1,\beta_2:\N\to(0,\infty)$ such that $\beta_1(r)=2^{-o(r)}$ and $\beta_2(r)=2^{-o(r)}$ as $r\to\infty$, and, for every $r\in\N$ and $x\in\R^n$,
		\[\bkappa(B(x,2^{-r}))\geq \beta_1(r)\mu(B(x,2^{-r}))\]
		and
		\[\mu(B(x,2^{-r}))\geq \beta_2(r)\bkappa(B(x,2^{-r})).\]
		Letting $\beta(r)=\min\{\beta_1(r),\beta_2(r)\}$, we have $\beta(r)=2^{-o(r)}$ as $r\to\infty$ and, for every $r\in\N$ and $x\in\R^n$,
		\[\left|\log\frac{1}{\mu(B(x,2^{-r}))}-\log\frac{1}{\bkappa(B(x,2^{-r}))}\right|\leq\log\frac{1}{\beta(r)}.\]
		It follows that, for all $x\in\R^n$,
		\[\left|\log\frac{1}{\mu(B(x,2^{-r}))}-K_r(x)\right|=o(r)\]
		as $r\to\infty$, whence~\eqref{eq:dimlocdim} and~\eqref{eq:DimlocDim} follow from~\eqref{eq:dim}--\eqref{eq:Dimloc}.
	\end{proof}

    \section{Point-to-Set Principles and Dimensions of Measures}\label{sec:pspdm}

		Local dimensions of measures give rise to global dimensions of measures, which we now briefly comment on. In classical fractal geometry, the global dimensions of Borel measures play a substantial role in studying the interplay between local and global properties of fractal sets and measures. The material in this section is from~\cite{Lutz17}.
	
	\begin{definition}[\cite{Falc97}]
		For any locally finite Borel measure $\mu$ on $\R^n$, the \emph{lower} and \emph{upper Hausdorff and packing dimension} of $\mu$ are
		\begin{align*}
		\ldimH(\mu)&=\sup\big\{\alpha \;\big|\; \mu(\{x \mid \dimloc\mu(x) < \alpha\}) = 0\big\}
		\\\udimH(\mu)&=\inf\big\{\alpha \;\big|\; \mu(\{x \mid \dimloc\mu(x) > \alpha\}) = 0\big\}
		\\\ldimP(\mu)&=\sup\big\{\alpha \;\big|\; \mu(\{x \mid \Dimloc\mu(x) < \alpha\}) = 0\big\}
		\\\udimP(\mu)&=\inf\big\{\alpha \;\big|\; \mu(\{x \mid \Dimloc\mu(x) > \alpha\}) = 0\big\},
		\end{align*}
		respectively.
	\end{definition}
	Extending these definitions to outer measures, we may consider global dimensions of the outer measure $\bkappa$. Since $\bkappa$ is supported on $\Q^n$ and $\dim(x)=0$ for all $x\in\Q^n$,
	\begin{equation}\label{eq:globaldim}
		\ldimH(\bkappa)=\udimH(\bkappa)=\ldimP(\bkappa)=\udimP(\bkappa)=0.
	\end{equation}
	
	The \emph{point-to-set principle}~\cite{LutLut18} expresses classical Hausdorff and packing dimensions in terms of \emph{relativized} algorithmic dimensions. That is, algorithmic dimensions in which the underlying universal Turing machine $U$ is an oracle machine with access to some oracle $A\subseteq\N$. We write $\dim^A(x)$ and $\Dim^A(x)$ to denote the algorithmic dimension and strong algorithmic dimension of a point $x\in \R^n$ relative to $A$.
	\begin{theorem}[\cite{LutLut18}]\label{thm:p2s}
		For every $E\subseteq \R^n$,
		\[\dimH(E)=\adjustlimits\min_{A\subseteq\N}\sup_{{x}\in E}\,\dim^A({x})\]
		and
		\[\dimP(E)=\adjustlimits\min_{A\subseteq\N}\sup_{{x}\in E}\,\Dim^A({x}).\]
	\end{theorem}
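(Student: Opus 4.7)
The plan is to prove each equality as a pair of inequalities, with the Hausdorff case serving as the template and the packing case following by analogous reasoning in which $\limsup$ replaces $\liminf$ and the decomposition characterization $\dimP(E) = \inf\{\sup_i \overline{\dim}_B(E_i) : E \subseteq \bigcup_i E_i\}$ replaces the direct covering definition of $\dimH$.

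For the upper bound $\dimH(E) \leq \sup_{x \in E} \dim^A(x)$ at a fixed oracle $A$, I would choose a rational $s$ strictly above the supremum. For each $x \in E$, the hypothesis $\liminf_r \K^A_r(x)/r < s$ supplies, for infinitely many $r$, a rational $q$ with $|q-x| < 2^{-r}$ and $\K^A(q) < sr$. Letting $U_r = \{q \in \Q^n : \K^A(q) < sr\}$, Kraft's inequality yields $|U_r| < 2^{sr}$, so for every $r_0 \in \N$ the family $\{B(q,2^{-r}) : r \geq r_0,\, q \in U_r\}$ covers $E$ and its $(s+\varepsilon)$-dimensional Hausdorff sum is $O(\sum_{r \geq r_0} 2^{-\varepsilon r})$, which vanishes as $r_0 \to \infty$. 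Hence $\mathcal{H}^{s+\varepsilon}(E) = 0$ for every $\varepsilon > 0$, so $\dimH(E) \leq s$.

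For the reverse inequality I would construct a single oracle $A$ that realizes the minimum. For each rational $s > \dimH(E)$ and each $k \in \N$, the vanishing of the $s$-dimensional Hausdorff measure of $E$ produces an open cover $\mathcal{C}_{s,k}$ of $E$ by sets of diameter $< 2^{-k}$ with $\sum_{V \in \mathcal{C}_{s,k}} \mathrm{diam}(V)^s < 1$. I would join all these countably many covers into one oracle $A$ and use Kraft--McMillan to index each $V \in \mathcal{C}_{s,k}$ by a prefix-free codeword of length at most $\lceil s\log_2(1/\mathrm{diam}(V))\rceil$. Then, relative to $A$, a set $V \in \mathcal{C}_{s,r}$ containing $x$ together with a rational point of $V$ describes $x$ to precision $2^{-r}$ using $sr + O(\log r)$ bits, giving $\dim^A(x) \leq s$ for every rational $s > \dimH(E)$, and hence $\dim^A(x) \leq \dimH(E)$ uniformly on $E$.

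The packing case follows the same template. For the upper bound, given $s > \sup_{x \in E} \Dim^A(x)$, I would partition $E$ by the first index $n$ beyond which $\K^A_r(x)/r < s$ holds at every scale $r \geq n$; each piece is then covered at every such scale by fewer than $2^{sr}$ rational balls of radius $2^{-r}$, so has upper box dimension at most $s$, giving $\dimP(E) \leq s$ via the decomposition characterization. For the lower bound, I would encode into the oracle a near-optimal decomposition of $E$ into pieces of small upper box dimension together with explicit near-minimal covers of each piece at every dyadic scale, and proceed as before. The hardest step will be the oracle-construction (lower-bound) direction in each case: one must merge a countable family of covers or decompositions into a single oracle and verify that the relativized description length of every point $x \in E$ remains $sr + o(r)$ at every scale, with constants uniform in $x$. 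The upper-bound directions are essentially routine Kraft-inequality counting.
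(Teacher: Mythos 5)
The paper does not prove Theorem~\ref{thm:p2s}; it cites it directly from~\cite{LutLut18}, so there is no in-paper proof to compare against. I therefore assess your argument on its own terms.

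Your overall strategy---upper bounds via Kraft--inequality counting of Kolmogorov-simple rationals, lower bounds by encoding near-optimal covers (for Hausdorff) or decompositions-plus-covers (for packing) into a single oracle and reading off cheap prefix codes---is sound and is the standard direct route to the point-to-set principle. The Hausdorff upper-bound paragraph is essentially complete; the bookkeeping you sketch gives $\mathcal{H}^{s+\varepsilon}(E)=0$ for every $\varepsilon>0$, hence $\dimH(E)\le\sup_{x\in E}\dim^A(x)$ for each fixed $A$. The packing upper bound via the decomposition characterization $\dimP(E)=\inf\{\sup_i\overline{\dim}_B(E_i):E\subseteq\bigcup_i E_i\}$ is likewise correct: partitioning $E$ by the first index $n$ past which $\K^A_r(x)<sr$ for all $r\ge n$, and counting rationals with $\K^A(q)<sr$, gives $\overline{\dim}_B(E_n)\le s$ for each piece.

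There are two genuine imprecisions in the oracle-construction direction that you should tighten, though neither invalidates the approach. First, the phrase ``a set $V\in\mathcal{C}_{s,r}$ containing $x$ together with a rational point of $V$ describes $x$'' must mean that the rational point is \emph{stored in the oracle}, retrievable from the codeword for $V$; if it were part of the program it would cost on the order of $n\log(1/\mathrm{diam}(V))$ bits and destroy the bound. Second, and more substantively, the claim that the description achieves ``precision $2^{-r}$ using $sr+O(\log r)$ bits'' is false as stated: a set $V\in\mathcal{C}_{s,r}$ may have $\mathrm{diam}(V)\ll 2^{-r}$, in which case its Kraft codeword has length $\lceil s\log(1/\mathrm{diam}(V))\rceil\gg sr$. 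The correct bookkeeping is that you achieve precision $R:=\lceil\log(1/\mathrm{diam}(V))\rceil\ge r$ at cost $sR+O(\log R)$ (the $O(\log R)$ overhead covers prefix-encoding the pair $(s,k)$ so the master code is prefix-free across all covers). This still yields, for each $k$, a scale $R_k\ge k$ with $\K^A_{R_k}(x)\le sR_k+O(\log R_k)$, hence $\liminf_r\K^A_r(x)/r\le s$, which is exactly what you need; but the step must be phrased so the precision and the code length grow together, not pinned to $r$. The analogous care is needed in the packing lower bound, where additionally the index $i$ of the piece $E_i\ni x$ costs $O_x(1)$ bits, which is harmless for a $\limsup$ in $r$. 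With those repairs, the proof is correct and yields the minima (not just infima), since you exhibit a concrete oracle in each case.
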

	In light of Theorem~\ref{thm:main}, this principle may be considered a member of the family of results, such as Billingsley's lemma~\cite{Bill61} and Frostman's lemma~\cite{Fros35}, that relate the local decay of measures to global properties of measure and dimension. Useful references on such results include~\cite{BisPer17,Hoch14,Matt95}.
	
	Among classical results, this principle is most directly comparable to 
	the \emph{weak duality principle} of Cutler~\cite{Cutl94} (see also~\cite{Falc97}),
	which expresses Hausdorff and packing dimensions in terms of lower and upper pointwise dimensions of measures. For nonempty $E\subseteq\R^n$, let $\Delta(E)$ be the collection of Borel probability measures on $\R^n$ such that $E$ is measurable and has measure 1, and let $\overline{E}$ be the closure of $E$.
	\begin{theorem}[\cite{Cutl94}] For every nonempty $E\subseteq\R^n$,
		\[\dimH(E)=\adjustlimits\inf_{\mu\in\Delta(\overline{E})}\sup_{{x}\in E}\,\dimloc\mu(x)\] 
		and
		\[\dimP(E)=\adjustlimits\inf_{\mu\in\Delta(\overline{E})}\sup_{{x}\in E}\,\Dimloc\mu(x).\]
	\end{theorem}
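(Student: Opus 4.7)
I would prove the Hausdorff and packing identities separately, each by establishing two inequalities. The Hausdorff case is the model; the packing case is structurally analogous, with balls and covers replaced by packings.

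\textbf{Upper bound $\dimH(E)\le\inf\sup$.} Fix $\mu\in\Delta(\overline E)$ and any $s>\sup_{x\in E}\dimloc\mu(x)$. By the definition of $\dimloc$, every $x\in E$ admits arbitrarily small radii $r$ with $\mu(B(x,r))\ge r^s$. For $\delta>0$, the family $\{B(x,r):x\in E,\,r\le\delta,\,\mu(B(x,r))\ge r^s\}$ is a Vitali cover of $E$; the $5r$-covering lemma extracts a countable disjoint subfamily $\{B(x_i,r_i)\}$ such that $\{B(x_i,5r_i)\}$ still covers $E$. Then $\sum_i r_i^s\le\sum_i\mu(B(x_i,r_i))\le 1$, so $E$ has a $10\delta$-cover of total $s$-weight at most $10^s$. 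Sending $\delta\to 0$ gives $\mathcal H^s(E)<\infty$, hence $\dimH(E)\le s$.

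\textbf{Lower bound $\dimH(E)\ge\inf\sup$.} Given $\varepsilon>0$, put $t=\dimH(E)+\varepsilon$, so $\mathcal H^t(E)=0$. For each $n\in\N$ choose a cover $\{U_i^{(n)}\}$ of $E$ with each $U_i^{(n)}$ meeting $E$, $\mathrm{diam}\,U_i^{(n)}\le 2^{-n}$, and $\sum_i(\mathrm{diam}\,U_i^{(n)})^t\le 2^{-n}$; pick $x_i^{(n)}\in U_i^{(n)}\cap E$. Define the discrete measure
\[
\mu=\sum_{n,i}(\mathrm{diam}\,U_i^{(n)})^t\,\delta_{x_i^{(n)}},
\]
whose total mass is at most $\sum_n 2^{-n}=1$ and which is supported in $E$. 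Since $\mu$ is discrete, every subset of $\R^n$ is $\mu$-measurable and $\mu(E)=|\mu|>0$. For each $x\in E$ and each $n$, $x\in U_i^{(n)}$ for some $i$, so $|x-x_i^{(n)}|\le\mathrm{diam}\,U_i^{(n)}$, and setting $r_n=2\,\mathrm{diam}\,U_i^{(n)}\to 0$ gives $\mu(B(x,r_n))\ge(r_n/2)^t$, from which $\dimloc\mu(x)\le t$. Normalizing to $\mu/|\mu|\in\Delta(\overline E)$ (with full mass on $E$) yields $\sup_{x\in E}\dimloc\mu(x)\le t$; letting $\varepsilon\to 0$ concludes the Hausdorff case.

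\textbf{Main obstacle.} The packing identity is the subtler half: the discrete-measure analogue must bound $\Dimloc\mu$ (a \emph{limsup}) at every $x\in E$ by $\dimP(E)+\varepsilon$, and one cannot merely assemble efficient covers one scale at a time. The standard workaround leverages the characterization of $\dimP$ as the modified upper box dimension, decomposing $E$ into countably many pieces of small upper box dimension and, on each piece, spreading mass uniformly over an optimal $2^{-n}$-packing at \emph{every} scale simultaneously, then weighting the pieces summably. A minor technical point common to both cases is the lack of any regularity assumption on $E$: this is what forces the construction to be a discrete measure supported in $E$ (so $E$ is automatically $\mu$-measurable of full measure) rather than a continuous Frostman-type measure.
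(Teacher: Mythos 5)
The paper does not prove this theorem; it is quoted as Cutler's weak duality principle from \cite{Cutl94} and used as a point of comparison, so there is no in-paper argument to match your attempt against. Evaluated on its own merits, your Hausdorff half is essentially correct. The upper bound is the mass-distribution/Frostman argument packaged with the $5r$-covering lemma: from $s>\sup_{x\in E}\dimloc\mu(x)$ you extract, for every $x\in E$, arbitrarily small radii with $\mu(B(x,r))\ge r^s$; disjointness of the Vitali subcollection turns $\sum_i r_i^s\le\sum_i\mu(B(x_i,r_i))\le 1$ into $\mathcal{H}^s(E)\le 10^s<\infty$, so $\dimH(E)\le s$. The lower bound, building a discrete measure from near-optimal $t$-covers and noting that each $x\in E$ lies within distance $\operatorname{diam}U_i^{(n)}$ of an atom of weight $\bigl(\operatorname{diam}U_i^{(n)}\bigr)^t$, is also the right construction; one small repair you should make explicit is to insist the covering sets have positive diameter (e.g.\ replace them by slightly larger open balls), since otherwise atoms can receive zero weight and the total mass could vanish in degenerate cases such as a singleton $E$. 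Normalization and the observation that rescaling does not change local dimension then close the argument.

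The packing half is where the genuine gap lies, and you half-acknowledge it. What you supply there is a plan, not a proof: you do not actually construct the measure that is claimed to spread mass uniformly over near-optimal packings at \emph{every} dyadic scale simultaneously, nor verify that it controls the $\limsup$ in $\Dimloc\mu(x)$ at every $x\in E$, nor explain how to weight the countably many pieces so that the resulting combination remains a probability measure on $\overline{E}$ with the desired local decay. Moreover, the packing \emph{upper} inequality $\dimP(E)\le\inf\sup$ is not addressed at all; it is not a formal analogue of the Hausdorff upper bound, since $\Dimloc\mu(x)<s$ only gives $\mu(B(x,r))\ge r^s$ for all sufficiently small $r$ with a threshold depending on $x$, and one must decompose $E$ countably by that threshold, bound the packing (pre)measure of each piece, and invoke countable stability of packing dimension. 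As written, the statement is established only in the Hausdorff case; the packing case remains to be carried out.
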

	By letting $\mathcal{A}=\{\bkappa^A\mid A\subseteq\N\}$ and invoking Observation~\ref{obs:kappaloc}, Theorem~\ref{thm:p2s} can be restated even more similarly as
	\[\dimH(E)=\adjustlimits\inf_{\mu\in\mathcal{A}}\sup_{x\in E}\dimloc\mu(x)\]
	and
	\[\dimP(E)=\adjustlimits\inf_{\mu\in\mathcal{A}}\sup_{x\in E}\Dimloc\mu(x).\]
	Notice, however, that the collections over which the infima are taken in these two results, $\mathcal{A}$ and $\Delta(\overline{E})$, are disjoint and qualitatively very different. In particular, $\mathcal{A}$ does not depend on $E$. Whereas the global dimensions of the measures in $\Delta(\overline{E})$ are closely tied to the dimensions of $E$~\cite{Falc97},~\eqref{eq:globaldim} shows that the outer measures in $\mathcal{A}$ all have trivial global dimensions.

    \section*{Acknowledgments}
	This research was supported in part by National Science Foundation Grants 1445755, 1545028, and 1900716. Some of this work was conducted while the second author was at Rutgers University, the University of Pennsylvania, and Iowa State University. We thank anonymous reviewers for their useful corrections and suggestions.

    \bibliography{oom}

\begin{thebibliography}{10}

\bibitem{AHLM07}
Krishna~B. Athreya, John~M. Hitchcock, Jack~H. Lutz, and Elvira Mayordomo.
\newblock {Effective strong dimension in algorithmic information and computational complexity}.
\newblock {\em SIAM Journal on Computing}, 37(3):671--705, 2007.

\bibitem{Bill61}
Patrick Billingsley.
\newblock Hausdorff dimension in probability theory {II}.
\newblock {\em Illinois Journal of Mathematics}, 5(2):291--298, 1961.

\bibitem{BisPer17}
Christopher~J. Bishop and Yuval Peres.
\newblock {\em Fractals in Probability and Analysis}.
\newblock Cambridge University Press, 2017.

\bibitem{Cara14}
Constantin Carath\'{e}odory.
\newblock \"{U}ber das lineare ma{\ss} von punktmengen---eine verallgemeinerung des l\"{a}ngenbegriffs.
\newblock {\em Nachrichten von der Gesellschaft der Wissenschaften zu G\"{o}ttingen, Mathematisch-Physikalische Klasse}, pages 404--426, 1914.
\newblock Translated as On the Linear Measure of Point Sets---a Generalization of the Concept of Length. In {\it Classics on Fractals}, Gerald A. Edgar (Ed.). Addison Wesley, 1993.

\bibitem{Cara18}
Constantin Carath\'{e}odory.
\newblock {\em Vorlesungen \"{u}ber reele Funktionen}.
\newblock Teubner, Leipzig, 1918.

\bibitem{CasLut15}
Adam Case and Jack~H. Lutz.
\newblock Mutual dimension.
\newblock {\em ACM Transactions on Computation Theory}, 7(3):12, 2015.

\bibitem{Cutl94}
Colleen~D. Cutler.
\newblock Strong and weak duality principles for fractal dimension in {E}uclidean space.
\newblock {\em Mathematical Proceedings of the Cambridge Philosophical Society}, 118:393--410, 1995.

\bibitem{DowHir10}
Rod Downey and Denis Hirschfeldt.
\newblock {\em Algorithmic Randomness and Complexity}.
\newblock Springer-Verlag, 2010.

\bibitem{Falc97}
Kenneth~J. Falconer.
\newblock {\em Techniques in Fractal Geometry}.
\newblock Wiley, 1997.

\bibitem{Falc14}
Kenneth~J. Falconer.
\newblock {\em Fractal Geometry: Mathematical Foundations and Applications}.
\newblock Wiley, third edition, 2014.

\bibitem{Fros35}
Otto Frostman.
\newblock Potential d'equilibre et capacit\'e des ensembles avec quelques applications a la the\'orie des fonctions.
\newblock {\em Meddelanden fr\r{a}n Lunds Universitets Matematiska Seminarium}, 3:1--118, 1935.

\bibitem{Haus19}
Felix Hausdorff.
\newblock Dimension und \"ausseres {M}ass.
\newblock {\em Mathematische Annalen}, 79:157--179, 1919.

\bibitem{Hoch14}
Michael Hochman.
\newblock Lectures on dynamics, fractal geometry, and metric number theory.
\newblock {\em Journal of Modern Dynamics}, 8(3--4):437--497, 2014.

\bibitem{jLevi73}
Leonid~A. Levin.
\newblock On the notion of a random sequence.
\newblock {\em Soviet Mathematics Doklady}, 14(5):1413--1416, 1973.

\bibitem{jLevi74}
Leonid~A. Levin.
\newblock Laws of information conservation (nongrowth) and aspects of the foundation of probability theory.
\newblock {\em Problemy Peredachi Informatsii}, 10(3):30--35, 1974.

\bibitem{LiVit19}
Ming Li and Paul M.~B. Vit\'{a}nyi.
\newblock {\em An Introduction to {K}olmogorov Complexity and its Applications}.
\newblock Springer-Verlag, Berlin, fourth edition, 2019.

\bibitem{Lutz03b}
Jack~H. Lutz.
\newblock The dimensions of individual strings and sequences.
\newblock {\em Information and Computation}, 187(1):49--79, 2003.

\bibitem{LutLut18}
Jack~H. Lutz and Neil Lutz.
\newblock Algorithmic information, plane {K}akeya sets, and conditional dimension.
\newblock {\em ACM Transactions on Computation Theory}, 10(2), 2018.

\bibitem{LutLut20}
Jack~H. Lutz and Neil Lutz.
\newblock Who asked us? how the theory of computing answers questions about analysis.
\newblock In Dingzhu Du and Jie Wang, editors, {\em Complexity and Approximation: In Memory of Ker-I Ko}, pages 48--56. Springer, 2020.

\bibitem{LuLuMa23}
Jack~H. Lutz, Neil Lutz, and Elvira Mayordomo.
\newblock Extending the reach of the point-to-set principle.
\newblock {\em Information and Computation}, 294:105078, 2023.

\bibitem{LutMay08}
Jack~H. Lutz and Elvira Mayordomo.
\newblock Dimensions of points in self-similar fractals.
\newblock {\em SIAM Journal on Computing}, 38(3):1080--1112, 2008.

\bibitem{Lutz17}
Neil Lutz.
\newblock {\em Algorithmic Information, Fractal Geometry, and Distributed Dynamics}.
\newblock PhD thesis, Rutgers University, 2017.

\bibitem{Lutz21}
Neil Lutz.
\newblock Fractal intersections and products via algorithmic dimension.
\newblock {\em ACM Transactions on Computation Theory}, 13(3), August 2021.

\bibitem{LutStu20}
Neil Lutz and D.M. Stull.
\newblock Bounding the dimension of points on a line.
\newblock {\em Information and Computation}, 275, 2020.

\bibitem{LutStu24}
Neil Lutz and D.M. Stull.
\newblock Projection theorems using effective dimension.
\newblock {\em Information and Computation}, 297:105137, 2024.

\bibitem{Matt95}
Pertti Mattila.
\newblock {\em Geometry of sets and measures in {E}uclidean spaces: fractals and rectifiability}.
\newblock Cambridge University Press, 1995.

\bibitem{Mayo02}
Elvira Mayordomo.
\newblock A {K}olmogorov complexity characterization of constructive {H}ausdorff dimension.
\newblock {\em Inf. Process. Lett.}, 84(1):1--3, 2002.

\bibitem{Mayo08}
Elvira Mayordomo.
\newblock Effective fractal dimension in algorithmic information theory.
\newblock In S.~Barry Cooper, Benedikt L{\"o}we, and Andrea Sorbi, editors, {\em New Computational Paradigms: Changing Conceptions of What is Computable}, pages 259--285. Springer-Verlag, 2008.

\bibitem{Nies09}
Andre Nies.
\newblock {\em Computability and Randomness}.
\newblock Oxford University Press, 2009.

\bibitem{Orpo21}
Tuomas Orponen.
\newblock Combinatorial proofs of two theorems of {L}utz and {S}tull.
\newblock {\em Mathematical Proceedings of the Cambridge Philosophical Society}, 171(3):503--514, 2021.

\bibitem{Reim04}
Jan Reimann.
\newblock {\em Computability and fractal dimension}.
\newblock PhD thesis, Heidelberg University, 2004.

\bibitem{Shan48}
Claude~E. Shannon.
\newblock A mathematical theory of communication.
\newblock {\em Bell System Technical Journal}, 27(3--4):379--423, 623--656, 1948.

\bibitem{SheVer02}
Alexander Shen and Nikolai~K. Vereshchagin.
\newblock Logical operations and {K}olmogorov complexity.
\newblock {\em Theoretical Computer Science}, 271(1--2):125--129, 2002.

\bibitem{Tao11}
Terence Tao.
\newblock {\em An Introduction to Measure Theory}.
\newblock Graduate Studies in Mathematics. American Mathematical Society, 2011.

\end{thebibliography}
\end{document}